\documentclass{llncs}
\usepackage[utf8]{inputenc}
\usepackage{graphicx, enumerate}
\usepackage{subfig}
\usepackage{amssymb}
\usepackage{amsmath}
\usepackage[english]{babel}
\usepackage{cite,hyperref}
\usepackage{xspace}
\usepackage{paralist}
\usepackage{times}
\usepackage{wrapfig}

\graphicspath{{fig/}}

\newcommand{\presence}{\Psi}
\newcommand{\activity}{\Phi}
\newcommand{\conflict}{C}

\newcommand{\T}{\ensuremath{\mathcal{T}}\xspace}

\newcommand{\R}{\mathbb{R}}
\newcommand{\GMT}{\textsc{GeneralMaxTotal}\xspace}

\newcommand{\GeneralMaxTotal}{\textsc{General\-Max\-Total}\xspace}
\newcommand{\kRestrictedMaxTotal}{$k$-\textsc{Re\-stric\-ted\-Max\-Total}\xspace}
\newcommand{\TwoRestrictedMaxTotal}{$2$-\textsc{Re\-stric\-ted\-Max\-Total}\xspace}
\newcommand{\GreedyMaxTotal}{\textsc{Greedy\-Max\-Total}\xspace}
\newcommand{\kRestrictedGreedy}{\textsc{Greedy\-Restricted\-Max\-Total
}\xspace}

\title{Trajectory-Based Dynamic Map Labeling} 
\author{Andreas Gemsa \and
  Benjamin Niedermann \and Martin~N\"ollenburg}
\institute{Karlsruhe Institute of Technology (KIT), Germany}

\begin{document}

\maketitle

\begin{abstract}

In this paper we introduce \emph{trajectory-based labeling}, a new variant of
dynamic map labeling, where a movement trajectory for the map viewport is
given. We define a general labeling model and study the active range
maximization problem in this model. The problem is {$\cal NP$}-complete and
$\mathcal W[1]$-hard. In the restricted, yet practically relevant case
that no more than $k$ labels can be active at any time, we give
polynomial-time algorithms. For the general case we present a
practical ILP formulation with an experimental evaluation as well as
approximation algorithms.

\end{abstract}

\section{Introduction}
In contrast to traditional static maps, dynamic digital maps support
continuous movement of the map viewport based on panning, rotation, or
zooming. 
Creating smooth visualizations under such map dynamics induces challenging geometric problems, e.g., continuous generalization~\cite{sb-cgvsm-04} or dynamic map labeling~\cite{bdy-dl-06}.
In this paper, we focus on map
labeling and take a trajectory-based view on it.  In many applications, e.g., car navigation, a movement
trajectory is known in advance and it becomes interesting to optimize
the visualization of the map locally along this trajectory.

Selecting and placing a maximum number of non-overlapping labels for
various map features is an important cartographic problem.  Labels are
usually modeled as rectangles and a typical objective in a static map is to
find a maximum (possibly weighted) independent set of labels.  This is
known to be {$\cal NP$}-complete~\cite{fpt-opcpn-81}. There are
several approximation algorithms and PTAS's in different labeling models~\cite{cc-mir-09,aks-lpmir-98}, as well as practically useful heuristics~\cite{ww-pla-95,wwks-trsglp-01}.

With the increasing popularity of interactive dynamic
maps, e.g., as %
digital globes or on mobile devices, the static labeling problem has
been translated into a dynamic setting.  Due to the temporal dimension
of the animations occurring during map movement, it is necessary to
define a notion of \emph{temporal consistency} or \emph{coherence} for map labeling as to avoid
distracting effects such as jumping or flickering
labels~\cite{bdy-dl-06}.  Previously, consistent labeling has been
studied from a global perspective under continuous
zooming~\cite{bnpw-oarcd-10} and continuous
rotation~\cite{gnr-clrm-11a}.  In practice, however, an individual map
user with a mobile device, e.g., a tourist or a car driver, is typically interested only in a specific part of a map and it
is thus often more important to optimize the labeling locally for a
certain trajectory of the map viewport than globally for the whole
map.

We introduce a versatile
trajectory-based model for dynamic map
labeling, and define three label activity models that guarantee concistency.  We apply
this model to point feature labeling for a viewport that moves and
rotates along a differentiable trajectory in a fixed-scale base map in
a forward-facing way.  Although we present our approach in a very specific
problem setting, our model is very general. Our approach can
be applied for every dynamic labeling problem that can be expressed as
a set of label availability intervals over time and a set of conflict
intervals over time for pairs of labels. The exact algorithms hold for
the general model, the approximation algorithm itself is also
applicable, but the analysis of the  approximation ratio requires
problem-specific geometric arguments, which must be adjusted to the specific setting. 

\noindent \textbf{Contribution.} For our specific problem, we show that maximizing the number of
visible labels integrated over time in our model is {$\cal
  NP$}-complete; in fact it is even $\mathcal W[1]$-hard and thus it
is unlikely that a fixed-parameter tractable algorithm exists. We
present an  integer linear programming (ILP) formulation for
the general unrestricted case, which is supported by a short
experimental evaluation. For the special case of unit-square labels we
give an efficient approximation algorithm with different
approximation ratios
depending on the actual label activity
model. Moreover, we present polynomial-time algorithms for the
restricted case that no more than $k$ labels are active at any time
for some constant $k$. We note that limiting the number of
simultaneously active labels is of practical interest as to avoid
overly
dense labelings, in particular for dynamic maps on small-screen
devices such as in car navigation systems.

\section{Trajectory-Based Labeling Model}\label{sec:model}
Let $M$ be a labeled north-facing, fixed-scale map, i.e., a set of
points $P=\{p_1,\dots,p_N\}$ in the plane together with a
corresponding set $L=\{\ell_1,\dots,\ell_N\}$ of labels. Each label
$\ell_i$ is represented by an axis-aligned rectangle of individual
width and height. We call the point~$p_i$ the \emph{anchor} of the
label~$\ell_i$. Here we assume that each label has an arbitrary but
fixed position relative to its anchor, e.g., with its lower left
corner coinciding with the anchor.  The \emph{viewport}~$R$ is an
arbitrarily oriented rectangle of fixed size that defines the
currently visible part of~$M$ on the map screen.  The viewport follows
a trajectory that is given by a continuous differentiable function
$T\colon [0, 1] \to \R^2$.  For an example see
Fig.~\ref{fig:example_trajectory}.
\begin{figure}[tb]
 \centering
 \includegraphics[page=2, scale=1]{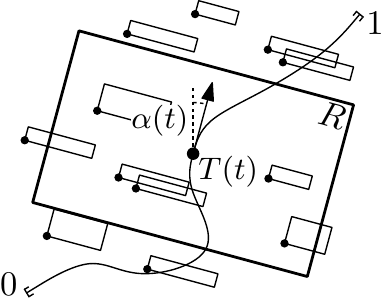} \hfil
 \includegraphics[page=1, scale=1]{./fig/example_trajectory}
 \caption{\small Illustration of the viewport moving along a trajectory.
   Left the user's view and right a general view of the map and the
   viewport.}
 \label{fig:example_trajectory}
\end{figure}
More precisely, we describe the viewport by a function~$V\colon
[0,1]\to \mathbb{R}^2\times [0,2\pi]$. The interpretation of
$V(t)=(c,\alpha)$ is that at time~$t$ the center of the rectangle~$R$
is located at~$c$ and~$R$ is rotated clockwise by the
angle~$\alpha$ relatively to a north base line of the map.
Since~$R$ moves along~$T$ we define~$V(t)=(T(t),\alpha(t))$,
where~$\alpha(t)$ denotes the direction of~$T$ at time~$t$.  For
simplicity, we sometimes refer to~$R$ at time~$t$ as~$V(t)$.  To
ensure good readability, we require that the labels are always aligned
with the viewport axes as the viewport changes its orientation, i.e.,
they rotate around their anchors by the same angle~$\alpha(t)$, see
Fig.~\ref{fig:example_trajectory}.  We denote the rotated label
rectangle of~$\ell$ at time~$t$ by~$\ell(t)$.

We say that a label~$\ell$ is \emph{present} at time~$t$, if~$V(t)\cap
\ell(t)\neq \emptyset$. As we consider the rectangles~$\ell(t)$
and~$V(t)$ to be closed, we can describe the points in time for
which~$\ell$ is present by closed intervals.  We define for each
label~$\ell$ the set~$\presence_\ell$ that describes all disjoint
subintervals of~$[0,1]$ for which~$\ell$ is present, thus
$\presence_\ell=\{[a,b] \mid [a,b]\subseteq[0,1]$ is maximal so that
$\ell$ is present at all $t\in[a,b]\}$.  Further, we define the
disjoint union~$\presence=\{([a,b],\ell) \mid [a,b]\in \presence_\ell$
and $\ell \in L\}$ of all $\presence_\ell$. We
abbreviate~$([a,b],\ell)\in\presence$ by~$[a,b]_\ell$ and call
$[a,b]_\ell\in \presence$ a \emph{presence interval} of $\ell$. In the
remainder of this paper we denote the number of presence intervals by
$n$.

Two labels~$\ell$ and~$\ell'$ are in \emph{conflict} with each other
at time~$t$ if $\ell(t)\cap\ell'(t)\neq \emptyset$.
If~$\ell(t)\cap\ell'(t)\cap V(t)\neq \emptyset$ we say that the
conflict is \emph{present} at time $t$. As in~\cite{gnr-clrm-11a} we
can describe
the occurrences of conflicts between two labels~$\ell,\ell'\in L$ by a
set of closed intervals:~$\conflict_{\ell,\ell'}=\{[a,b]\subseteq
[0,1]\mid [a,b]$ is maximal and~$\ell$ and $\ell'$ are in conflict at
all $t\in[a,b]\}$.  We define the disjoint
union~$C=\{([a,b],\ell,\ell')\mid [a,b]\in C_{\ell,\ell'}$ and
$\ell,\ell'\in L\}$ of all $\conflict_{\ell,\ell'}$.  We
abbreviate~$([a, b], \ell, \ell') \in C$ as $[a, b]_{\ell, \ell'}$ and
call it a \emph{conflict interval} of~$\ell$ and~$\ell'$.  Two
presence intervals~$[a,b]_\ell$ and~$[c,d]_{\ell'}$ \emph{are in
  conflict} if there is a conflict~$[f,g]_{\ell,\ell'}\in\conflict$
s.t.\ the intersection of the
intervals~$[f,g]_{\ell,\ell'}\cap[a,b]_\ell\cap[c,d]_{\ell'}
\neq\emptyset$.

The tuple~$(P,L,\presence,\conflict)$ is called an \emph{instance} of
trajectory-based labeling. Note that the essential information of~$T$
is implicitly given by $\presence$ and $\conflict$ and that for each
label~$\ell\in L$ there can be several presence intervals. In this paper we assume that $\presence$ and $\conflict$ is given as input.
In practice, however, we usually first need to compute~$\presence$ and~$\conflict$ given a continuous and differentiable trajectory $T$. An interesting special case is that~$T$ is a continuous, differentiable chain of~$m$ circular arcs (possibly of infinite radius), e.g., obtained by approximating a polygonal route in a road network. Niedermann~\cite{n-cldmust-12} showed that in this case the set $\presence$ can be computed in~$O(m\cdot N)$ time
and the set~$C$ in~$O(m\cdot N^2)$ time. His main observation was
that for each arc of $T$ the viewport can in fact be treated as a huge
label and that ``conflicts'' with the viewport correspond to presence
intervals. We refer to~\cite[Chapter 15]{n-cldmust-12} for
details.

Next we define the \emph{activity} of labels, i.e., when to actually
display which of the present labels on screen.  We restrict ourselves
to closed and disjoint intervals describing the activity of a
label~$\ell$ and define the set~$\activity_\ell=\{[a,b]\subseteq[0,1]
\mid [a,b]$ is maximal such that $\ell$ is active at all $t\in[a,b]\}$, as well as the disjoint
union~$\activity=\{([a,b],\ell)\mid [a,b]\in \activity_\ell$ and
$\ell\in L\}$ of all $\activity_\ell$. We
abbreviate~$([a,b],\ell)\in\activity$ with~$[a,b]_\ell$ and call
$[a,b]_\ell\in \activity$ an \emph{active interval} of~$\ell$.

It remains to define an \emph{activity model} restricting $\activity$
in order to obtain a reasonable labeling. Here we propose
three activity models AM1, AM2, AM3 with increasing flexibility.  All
three activity models exclude overlaps of displayed labels and
guarantee consistency criteria introduced by Been et
al.\cite{bdy-dl-06}, i.e., labels must not flicker or jump. To that
end they share the following
 properties \textbf{(A)} a label~$\ell$ can only be active at time $t$
if it is present at time $t$, \textbf{(B)} to avoid flickering and
jumping each presence interval of~$\ell$ contains at most one active
interval of~$\ell$, and \textbf{(C)} if two labels are in conflict at
a time~$t$, then at most one of them may be active at 
$t$ to avoid overlapping labels.

What distinguishes the three models are the possible points in time
when labels can become active or inactive.  The first and most
restrictive activity model AM1 demands that each activity
interval~$[a,b]_\ell$ of a label~$\ell$ must coincide with a presence
interval of~$\ell$. The second activity model AM2 allows an active
interval of a label $\ell$ to end earlier than the corresponding
presence interval if there is a \emph{witness label} $\ell'$ for that,
i.e., an active interval for~$\ell$ may end at time~$c$ if there is a
starting conflict interval $[c, d]_{\ell, \ell'}$ and the conflicting
label $\ell'$ is
active at $c$. However, AM2 still requires every active interval to
begin with the corresponding presence interval. The third activity
model AM3 extends AM2 by also relaxing the restriction regarding the
start of active intervals.  An active interval for a label $\ell$ may
start at time $c$ if a present conflict $[a,c]_{\ell,\ell'}$ involving
$\ell$ and an active witness label $\ell'$ ends at time $c$. In this
model active intervals may begin later and end earlier than their
corresponding presence intervals if there is a visible reason for the
map user to do so, namely the start or end of a conflict with an
active witness label.

A common objective in both static and dynamic map labeling is to
maximize the number of labeled points. Often, however, certain labels
are more important than others. To account for this, each label $\ell$
can be assigned a weight $W_\ell$ that corresponds to its
significance. Then we define the weight of an interval $[a, b]_\ell
\in \activity$ as $w([a, b]_\ell) = (b-a) \cdot W_\ell$.  
Given an
instance~$(P,L,\presence,\conflict)$, then with respect to one of the
three activity models we want to find an activity $\Phi$ that
maximizes~$\sum_{[a,b]_\ell\in \activity}w([a,b]_\ell)$; we call this
optimization problem \textsc{GeneralMaxTotal}.  If we require
that at
any time $t$ at most~$k$ labels are active for some~$k$, we
call the problem
\mbox{$k$-\textsc{RestrictedMaxTotal}}. In particular the
latter problem is interesting for small-screen devices, e.g., car
navigation systems, that should not overwhelm the user with
additional information.

\section{Solving \GeneralMaxTotal}

We first prove that
\GMT is $\mathcal NP$-complete. The membership of
\GMT in~$\mathcal NP$ follows from the fact that
the start and the end of an active interval must coincide with the start or
end of a presence interval or a conflict interval. Thus, there is a
finite number of candidates for the endpoints of the active intervals so
that a solution~$\mathcal L$ can be guessed. Verifying
that~$\mathcal L$ is valid in one of the three models and that its value
exceeds a given threshold can obviously be checked in polynomial time.

For the $\mathcal NP$-hardness we apply a straight-forward reduction
from the $\cal NP$-complete maximum independent set of rectangles
problem~\cite{fpt-opcpn-81}. We simply interpret the set of rectangles
as a set of labels with unit weight, choose a short vertical
trajectory $T$ and a viewport $R$ that contains all labels at any
point of $T$. Since the conflicts do no change over time, the
reduction can be used for all three activity models. By means of
the same reduction and Marx'
result~\cite{Marx05} that finding an independent set for a given set
of axis-parallel unit squares is $\mathcal W[1]$-hard we derive the
next theorem.

\begin{theorem}
  \label{thm:gmt:npc}
  \GMT is $\mathcal{NP}$-complete and
  $\mathcal{W}[1]$-hard for all activity models AM1--AM3.
\end{theorem}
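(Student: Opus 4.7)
The plan is to carry out exactly the reduction sketched in the paragraph preceding the statement, spelling out why it behaves identically in all three activity models and why it preserves both the NP-hardness and the W[1]-hardness of the source problem.

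First, for membership in $\mathcal{NP}$, I would argue that any optimal activity $\activity$ can be assumed to have each endpoint of every active interval coinciding with an endpoint of a presence interval or a conflict interval: this is immediate from the three activity models, where active intervals are only allowed to start or end at presence boundaries or at boundaries of present conflict intervals with an active witness. Since there are only $O(n+|\conflict|)$ such candidate endpoints, a nondeterministic machine can guess for each label an interval from this finite set of candidates and then verify in polynomial time properties (A)--(C), the model-specific witness conditions for AM2/AM3, and that the total weight exceeds the given threshold.

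For the hardness part, I would reduce from the maximum independent set problem on a family $\mathcal R=\{R_1,\dots,R_N\}$ of axis-parallel rectangles (and, for the parameterized result, from the unit-square version, which is $\mathcal W[1]$-hard by Marx~\cite{Marx05}). Given $\mathcal R$, I would build an instance $(P,L,\presence,\conflict)$ as follows: for each $R_i$ introduce one label $\ell_i$ of unit weight whose rectangle equals $R_i$ (anchored, say, at its lower-left corner); choose the viewport $R$ to be a single axis-aligned rectangle large enough to contain $\bigcup_i R_i$; and let $T$ be a short vertical straight segment of length $1$ so that $\alpha(t)\equiv 0$ and $V(t)$ is a pure translation that still contains every $R_i$ for all $t\in[0,1]$. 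Then every label has a single presence interval $[0,1]_{\ell_i}$, and two labels $\ell_i,\ell_j$ have a conflict interval $[0,1]_{\ell_i,\ell_j}$ if and only if $R_i\cap R_j\neq\emptyset$; otherwise they never conflict.

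The key observation is that, because $\presence_{\ell_i}=\{[0,1]\}$ for every $i$ and because no conflict interval ever starts or ends in the interior of $[0,1]$, there is no moment at which a witness label can legitimately terminate or initiate an active interval in AM2 or AM3. Hence in all three models the admissible active intervals for $\ell_i$ are exactly $\emptyset$ and $[0,1]$, and property (C) forces the set $S=\{i:\activity_{\ell_i}=\{[0,1]\}\}$ to be an independent set of $\mathcal R$. Conversely, any independent set $S$ yields a valid activity of weight exactly $|S|$ in every model. Thus \GMT{} instances of value $\ge K$ correspond bijectively to independent sets of size $\ge K$, and the reduction is a polynomial-time (and parameter-preserving) many-one reduction, which together with the $\mathcal{NP}$ membership above gives the full theorem.

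The only real subtlety, and the step I would be most careful about, is verifying that AM2 and AM3 do not accidentally allow a strictly larger objective than AM1 on this instance; the argument above handles this, but one must explicitly check that the witness conditions require a conflict interval boundary strictly inside the presence interval, which never occurs here because all intervals are $[0,1]$.
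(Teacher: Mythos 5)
Your proposal is correct and follows essentially the same route as the paper: membership in $\mathcal{NP}$ via the finitely many candidate endpoints of active intervals, and a reduction from maximum independent set of axis-parallel rectangles (unit squares for the $\mathcal{W}[1]$-hardness, via Marx) using a short vertical trajectory and an all-containing viewport so that presence and conflict intervals are all $[0,1]$ and the three activity models coincide. Your additional care in checking that the witness rules of AM2/AM3 cannot be triggered because no conflict boundary lies in the interior of a presence interval is exactly the point the paper compresses into ``since the conflicts do not change over time.''
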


As a consequence, \GMT is not fixed-parameter tractable
unless $\mathcal{W}[1] =\mathcal{FPT}$. Note that this also means that
for $k$-\text{RestrictedMaxTotal} we cannot expect to find an
algorithm that runs in $O(p(n)\cdot C(k))$ time, where $p(n)$ is a
polynomial that depends only on the number~$n$ of presence intervals
and the computable function~$C(k)$ depends only on the parameter~$k$.

\subsection{Integer Linear Programming for \GeneralMaxTotal}
Since we are still interested in finding an optimal solution for
\GMT we have developed integer linear programming
(ILP) formulations for all three activity models. We present the formulation for the most involved model AM3 and then argue how to adapt it to the simpler models AM1 and AM2.

\begin{wrapfigure}[10]{r}{.45\textwidth}
    \centering
     \vspace{-3ex}
     \includegraphics{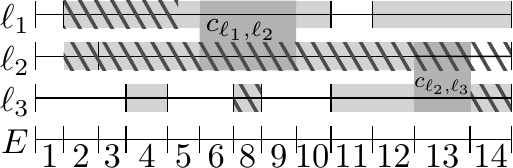}
     \caption{\small 
       Depiction of presence intervals (light gray), active intervals (hatched), and conflicts (dark gray).
     }
   \label{fig:example_intervals}
   \end{wrapfigure}
We define $E$ to be the totally ordered set of the endpoints
of all presence and all conflict intervals and include 0
and 1; see~Fig.~\ref{fig:example_intervals}. We call each
interval~$[c,d]$ between two consecutive elements~$c$ and~$d$ in~$E$
an \emph{atomic segment} and denote the~$i$-th atomic segment of~$E$
by~$E(i)$. Further, let $X(\ell,i)$ be the set of labels that are in
conflict with~$\ell$ during $E(i-1)$, but not during $E(i)$, i.e.,
the conflicts end with~$E(i-1)$. Analogously, let $Y(\ell,i)$ be the
set of labels that are in conflict with~$\ell$ during $E(i+1)$, but
not during $E(i)$, i.e., the conflicts begin with~$E(i+1)$.
For each label~$\ell$ we introduce
three binary variables $b_i,x_i,e_i\in\{0,1\}$ and the following constraints.
{\small \begin{align}
\label{eq:A}b_i^\ell =x_i^\ell =e_i^\ell=0 && \forall 1\leq i\leq
|E|\ \text{s.t. }\forall [c,d]\in
\presence_\ell: E(i)\cap [c,d]=\emptyset \\
\label{eq:B}\sum_{j \in J} b^\ell_{j}\leq 1 \text{ and } \sum_{j \in
J}
e^\ell_{j}\leq 1 && \forall [c,d]\in
\presence_\ell \text{ where }J=\{j \mid E(j) \subseteq [c,d]\}\\
\label{eq:C}x^\ell_i+x^{\ell'}_i\leq 1 && \forall 1\leq i\leq
|E|\ \forall
[c,d]_{\ell,\ell'}\in C:\ E(i)\subseteq[c,d]\\
\label{eq:D}x^\ell_{i-1} + b^\ell_i = x^\ell_{i} + e^\ell_{i-1} &&
\forall 1\leq i\leq
|E|\ (\text{set } x_0=e_0=0)\\
\label{eq:E}b^\ell_{j}\leq \sum_{\ell'\in X(\ell,j)}
x^{\ell'}_{j-1}  && \forall [c,d]_\ell \in\presence\ \forall
E(j) \subset [c,d]_\ell \text{ with } c\not\in E(j)\\
\label{eq:F}e^\ell_{j}\leq \sum_{\ell'\in Y(\ell,j)} x^{\ell'}_{j+1} 
&& \forall [c,d]_\ell \in\presence\ \forall
E(j) \subset [c,d]_\ell \text{ with } d\not\in E(j)
\end{align}}
Subject to these constraints we maximize $\sum_{\ell \in
L}\sum_{i=1}^{|E|-1} x_i^{\ell}\cdot w(E(i))$.
The intended meaning of the variables is that $x^\ell_i = 1$
if $\ell$ is active during $E(i)$ and otherwise
$x^\ell_i = 0$. Variable $b^\ell_i = 1$ if and only
if~$E(i)$ is the first atomic segment of an active interval of $\ell$,
and analogously~$e^\ell_i=1$ if and only if~$E(i)$ is the last atomic
segment of an active interval of~$\ell$.
Recall the properties of the activity models as defined in
Section~\ref{sec:model}. Constraints~(\ref{eq:A})--(\ref{eq:C})
immediately ensure properties~(A)--(C), respectively. 
Constraint~(\ref{eq:D}) means that if~$\ell$ is active during~$E(i-1)$
($x^\ell_{i-1}=1$),  then it must either
 stay active during~$E(i)$
($x^\ell_{i}=1$) or the active interval ends with~$E(i-1)$
($e^\ell_{i-1}=1$), and if~$\ell$ is active during~$E(i)$
($x^\ell_{i}=1$) then it must be active during~$E(i-1)$
($x^\ell_{i-1}=1$) or the active interval begins with~$E(i)$
($b^\ell_{i}=1$).  Constraint~(\ref{eq:E}) enforces that for~$\ell$ to
become active with~$E(j)$ at least one witness label of
$X(\ell,j)$ is active during~$E(j-1)$. Analogously,
constraint~(\ref{eq:F}) enforces that for~$\ell$ to
become inactive with~$E(j)$ at least one witness label of
$Y(\ell,j)$ is active during~$E(j+1)$. Note that without the explicit
constraints (\ref{eq:E}) and (\ref{eq:F}) two conflicting labels could
switch activity at any point during the conflict interval rather than
only at the endpoints. For an
example see~Fig.~\ref{fig:ilp:constraint:necessary}.  The drawing
shows an optimal solution that is valid for the
    ILP formulation if the constraints (5) and (6) are omitted. In
    particular~$\ell_1$ becomes inactive at time~$t$, although~$t$ is
    not the right boundary of the corresponding presence interval and
    there is no conflict of~$\ell_1$ that begins at~$t$ such that the
    corresponding opponent is active from~$t$ on. Analogous
    observations can be made for~$\ell_2$. Consequently, this solution
    does not satisfy AM3.

\begin{figure}[tb]
  \centering
  \includegraphics[page=3]{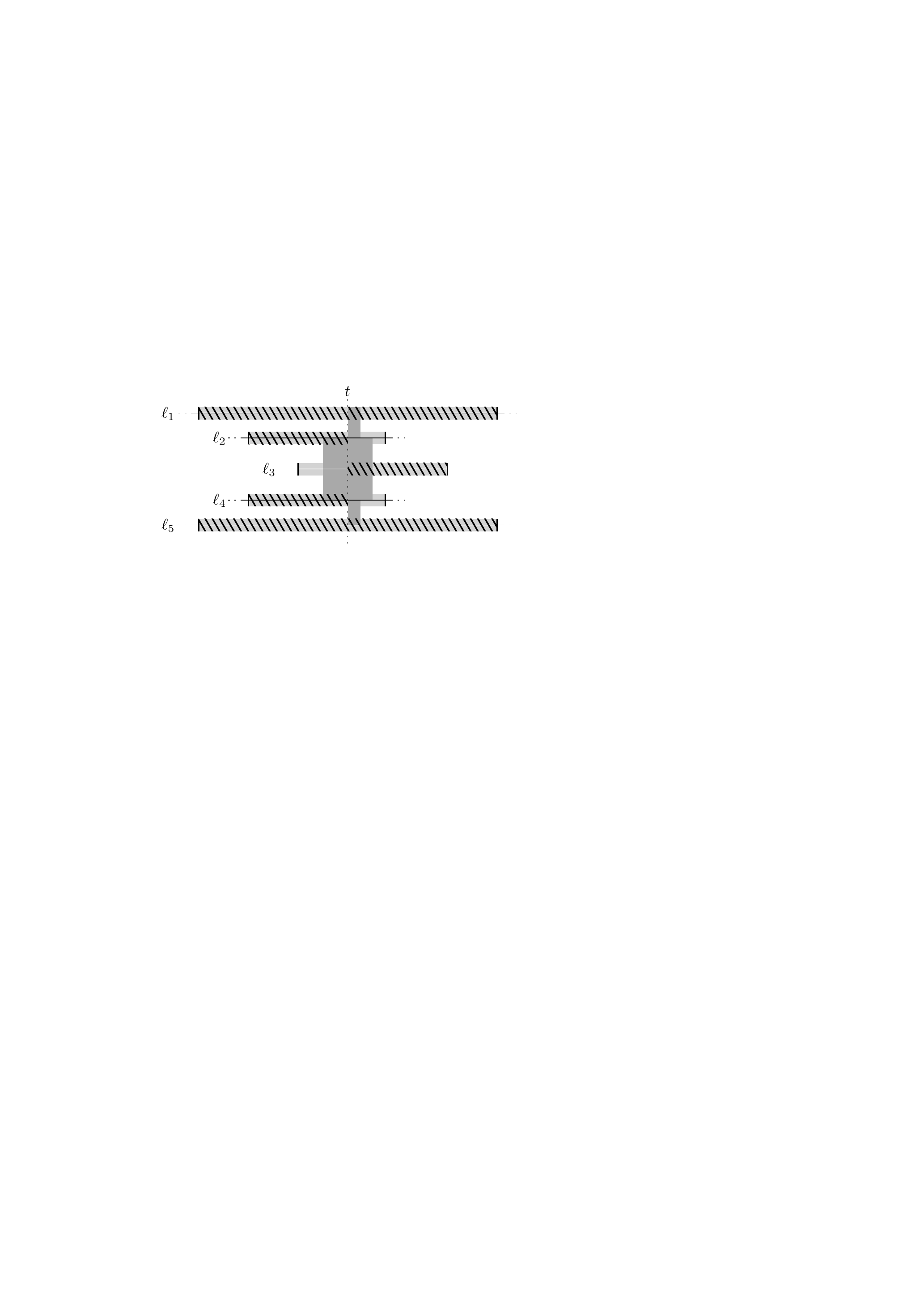}
  \caption{\small The light gray intervals show presence intervals,
the
    hatched intervals active intervals and the dark gray intervals
    conflicts between labels.  Further, the bottom line illustrates
    possible atomic segments, when assuming that there is a third
    label that induces the segmentation at
    time~$t$.
     }
\label{fig:ilp:constraint:necessary}
\end{figure}

\newcommand{\thmilptext}{
Given an instance $I=(P,L,\presence,C)$, the
  ILP (\ref{eq:A})--(\ref{eq:F}) computes an optimal solution~$\activity$ of
  \GMT in AM3. It uses $O(N\cdot (|\presence| + |C|))$
  variables and constraints.}

\begin{theorem}\label{thm:ilp}
\thmilptext
\end{theorem}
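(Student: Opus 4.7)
The plan is to prove Theorem~\ref{thm:ilp} by (i) exhibiting a value-preserving bijection between valid AM3 activities and feasible integer assignments to the ILP variables, and (ii) counting variables and constraints.

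For the forward direction of the bijection, given a valid AM3 activity~$\activity$, I would set $x^\ell_i=1$ iff~$\ell$ is active during~$E(i)$, and $b^\ell_i=1$ (resp.\ $e^\ell_i=1$) iff~$E(i)$ is the first (resp.\ last) atomic segment of an active interval of~$\ell$. Since by AM3 every active interval starts and ends at an endpoint of a presence interval or a conflict interval, and all such points lie in~$E$ by construction, active intervals are precisely unions of consecutive atomic segments, so the assignment is well defined. Then I would verify the six constraint families individually: (\ref{eq:A})--(\ref{eq:C}) translate properties~(A)--(C) of the model directly; (\ref{eq:D}) is an automaton-style transition expressing that $x^\ell$ changes value exactly when a $b^\ell$ or $e^\ell$ marker fires; and (\ref{eq:E}), (\ref{eq:F}) are precisely the witness-existence requirement AM3 imposes at the endpoints of active intervals that lie strictly inside a presence interval.

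For the converse direction, given a feasible ILP solution I would declare~$\ell$ active on every~$E(i)$ with $x^\ell_i=1$ and merge consecutive such atomic segments into maximal active intervals. Constraint~(\ref{eq:A}) forces each active interval to lie within a presence interval, (\ref{eq:B}) bounds it to at most one per presence interval, and (\ref{eq:C}) prevents overlap of conflicting active labels. Whenever a maximal run starts at an atomic segment~$E(j)$ whose left endpoint is strictly inside its surrounding presence interval~$[c,d]_\ell$, (\ref{eq:D}) forces $b^\ell_j=1$, and then~(\ref{eq:E}) supplies a witness~$\ell'\in X(\ell,j)$ with $x^{\ell'}_{j-1}=1$, whose conflict with~$\ell$ ends precisely at the left endpoint of~$E(j)$; this is exactly the AM3 start condition. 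Constraint~(\ref{eq:F}) yields the symmetric end condition. The two objectives match once $w(E(i))$ is read as the length of~$E(i)$ times the label weight~$W_\ell$: summing $x^\ell_i \cdot w(E(i))$ over~$i$ for fixed~$\ell$ recovers $W_\ell$ times the total active time of~$\ell$, which equals $\sum_{[a,b]_\ell \in \activity} w([a,b]_\ell)$.

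For the size bound, $|E|\in O(|\presence|+|C|)$ because $E$ collects only presence and conflict endpoints together with~$\{0,1\}$, giving $3N(|E|-1)\in O(N(|\presence|+|C|))$ variables. Constraints~(\ref{eq:A}), (\ref{eq:D}), (\ref{eq:E}), and~(\ref{eq:F}) each produce $O(N\cdot|E|)$ inequalities, (\ref{eq:B}) contributes $O(|\presence|)$, and~(\ref{eq:C}) contributes $O(|C|)$, matching the claimed bound. The main obstacle, in my view, is the converse direction: one must argue formally that the combinatorial encoding via $b$, $e$, and $x$ together with~(\ref{eq:E}) and~(\ref{eq:F}) really excludes the kind of unjustified transition depicted in Fig.~\ref{fig:ilp:constraint:necessary}, where a label becomes inactive in the middle of a presence interval without a visible conflict start accompanied by an actually active witness. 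Making this tight amounts to matching atomic-segment-level timing in the ILP against the continuous-time AM3 definition at the boundary points of atomic segments.
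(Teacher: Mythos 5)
Your proposal follows essentially the same route as the paper's own proof: the same variable assignment in both directions, the same segment-level correspondence justified by the observation that AM3 activity endpoints coincide with presence/conflict interval endpoints, the same constraint-by-constraint verification (with the case analysis on whether $E(j)$ starts a presence interval handling the witness condition via constraints~(\ref{eq:E}) and~(\ref{eq:F})), and the same counting argument for the size bound. The "main obstacle" you flag in the converse direction is resolved exactly as you sketch it, so there is nothing substantive to add.
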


\begin{proof}
  Every solution of the ILP corresponds to an activity~$\activity$ by
  defining for every label $\ell$ the set $\activity_\ell$ as the set
  of all maximal intervals in $\bigcup_{i:x_i^\ell=1}
  E(i)$. Conversely, every valid activity~$\activity$ in AM3 can be
  expressed in terms of the variables of the ILP. To show that we
  first observe that for every valid activity interval $[a,b]_{\ell}$
in
  AM3 the endpoints $a$ and $b$ are necessarily endpoints of a
  conflict interval or a presence interval of $\ell$. Thus
  $[a,b]_\ell$ can be expressed as the union of consecutive atomic
  segments represented by the variables $x_i^\ell$.
  
  It is clear that the objective function computes the weight of a
  solution $\activity$ correctly. Thus it remains to show that the
  constraints (1)--(6) indeed model AM3, i.e., every solution of the
  ILP satisfies AM3 and every activity in AM3 is a solution of the
  ILP.  It follows immediately from the definition of constraints
  (1)--(3) that they model properties (A)--(C), assuming that the
  start- and endpoint of every activity interval is indeed marked by
  setting $b_i^\ell=1$ and $e_j^\ell=1$ for its first and last atomic
  segments $E(i)$ and $E(j)$. But this is achieved by the
  constraints~(4) as discussed above.  Now in AM3 a label can only
  become active (inactive) at the start (end) of its presence interval
  or at the end (start) of a conflict interval if the conflicting
  label is active as a witness. We show that constraint (5) yields
  that the start of an activity interval is correct according to
  AM3. The argument for the end of an activity interval follows
  analogously from constraint (6).  Let $E(i)$ be the first atomic
  segment in an activity interval of the label $\ell$. Then by
  constraint (4) we have $b_i^\ell = 1$ and $x_i^\ell = 1$. If $E(i)$
  is the first segment of a presence interval then this is a valid
  start according to AM3. Note that constraint (5) is not
  present in that case and thus does not restrict $b_i^\ell$.
Otherwise let $E(i)$ be not the first
  segment of a presence interval.  Then for this segment the ILP
  contains constraint (5). If no conflict interval of $\ell$ ends with
  $E(i-1)$ then (5) sets $b_i^\ell = 0$ anyways, so this is not
  possible. If some conflict intervals of $\ell$ end with $E(i-1)$ but
  none of them are active in $E(i-1)$ then constraint (5) also yields
  $b_i^\ell = 0$. So the only two possibilities for $b_i^\ell = 1$ are
  that either $E(i)$ is the first segment of a presence interval or
  $E(i)$ is the first segment after a conflict interval of $\ell$ for
  which a witness label is active. Thus every solution of the ILP
  satisfies AM3.

  Conversely, let $\activity$ be valid according to AM3. Since
  $\activity$ satisfies properties (A)--(C), the corresponding
  assignment of binary values to the variables $x_i^\ell$, $b_i^\ell$,
  and $e_i^\ell$ satisfy constraints (1)--(4). It remains to show that
  the constraints (5) and (6) hold. Let $[a,b]_\ell \in \activity$ be
  a particular activity interval and let $E(i)$ be the atomic segment
  starting at $a$.  If $a$ is the start of a presence interval of
  $\ell$ then there is no constraint (5) for $\ell$ and the segment
  $E(i)$ and thus it is possible to have $b_i^\ell = 1$. Otherwise,
  $a$ is the end of a conflict interval of $\ell$ with another label
  $\ell'$ that is an active witness in the atomic segment $E(i-1)$
  ending at $a$. This means that $x_{i-1}^{\ell'} = 1$ and thus
  constraint (5) is satisfied for $b_i^\ell = 1$. Analogous reasoning
  for the endpoints of all activity intervals and constraint (6) yield
  that $\activity$ can indeed be represented as a solution to the ILP.

  Since the number of atomic segments is $O(|\presence| + |C|)$ and
  there are $N$ labels the bound on the size of the ILP follows.  \qed
\end{proof}

We can adapt the above ILP to AM1 and AM2 as follows. For AM2 we
replace the right hand side of constraint~(\ref{eq:E}) by $0$, and for
AM1 we also replace the right hand side of constraint~(\ref{eq:F}) by
$0$. This
excludes exactly the start- and endpoints of the activity intervals
that are forbidden in AM1 or AM2. It is easy to see that these ILP
formulations can be modified further to solve \kRestrictedMaxTotal by
adding the constraint $\sum_{\ell \in L} x_i^\ell \le k$ for each
atomic segment $E(i)$.
\begin{corollary}\label{cor:ilp}
  Given an instance $I=(P,L,\presence,C)$, \GMT and
  \kRestrictedMaxTotal can be solved in AM1, AM2, and AM3 by an ILP
  that uses $O(N\cdot (|\presence| + |C|))$ variables and constraints.
\end{corollary}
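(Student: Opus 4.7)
The plan is to derive Corollary~\ref{cor:ilp} by small, localized modifications of the ILP and proof of Theorem~\ref{thm:ilp}. Since Theorem~\ref{thm:ilp} already handles \GMT in AM3, I would keep the variables and constraints~(\ref{eq:A})--(\ref{eq:D}) unchanged: they encode properties~(A)--(C) and the structural book-keeping that $b_i^\ell$ and $e_j^\ell$ indeed mark the first and last atomic segments of each activity interval. Only constraints~(\ref{eq:E}) and~(\ref{eq:F}), which implement the witness-based starts and ends characteristic of AM3, need adjustment. For AM2 I would replace the right-hand side of~(\ref{eq:E}) by $0$; the constraint then forces $b_j^\ell = 0$ for every atomic segment $E(j) \subset [c,d]_\ell$ with $c \notin E(j)$, so that $b_j^\ell = 1$ remains possible only at the first atomic segment of a presence interval, which is exactly the AM2 start condition. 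For AM1 I would additionally replace the right-hand side of~(\ref{eq:F}) by $0$, which by the symmetric argument restricts ends of activity intervals to coincide with ends of presence intervals.

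Correctness of each variant would be verified by repeating both directions of the proof of Theorem~\ref{thm:ilp}, with the case analysis for the start (respectively, end) of an activity interval simplified: in the stricter model the witness-based case is ruled out by the activity model, while the modified constraint rules out exactly the same case in the ILP. For \kRestrictedMaxTotal I would add, for every atomic segment $E(i)$, the single constraint $\sum_{\ell \in L} x_i^\ell \le k$; since $x_i^\ell = 1$ iff $\ell$ is active during $E(i)$, this captures the $k$-restriction directly, and the proof of Theorem~\ref{thm:ilp} carries over without any further change.

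For the size bound, the number of atomic segments is $O(|\presence| + |C|)$, so the three variables per label and segment yield $O(N(|\presence| + |C|))$ variables; the constraints~(\ref{eq:A})--(\ref{eq:F}) (and their modifications) contribute the same asymptotic number of constraints, and the $k$-restriction adds only one constraint per atomic segment, hence $O(|\presence|+|C|)$ further constraints in total. The main obstacle, if any, is a careful sanity check that the modified right-hand sides do not accidentally forbid the permitted endpoints: for AM2 and AM1 no constraint of type~(\ref{eq:E}) is generated when $c \in E(j)$ (by the quantifier in~(\ref{eq:E})), so $b_j^\ell = 1$ at the first segment of a presence interval is still allowed, and symmetrically for~(\ref{eq:F}). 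Once this is checked, the corollary follows directly.
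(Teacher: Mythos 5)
Your proposal matches the paper's argument exactly: the paper likewise sets the right-hand side of constraint~(\ref{eq:E}) to $0$ for AM2, additionally sets the right-hand side of~(\ref{eq:F}) to $0$ for AM1, and adds $\sum_{\ell \in L} x_i^\ell \le k$ per atomic segment for \kRestrictedMaxTotal, with the same size bound inherited from Theorem~\ref{thm:ilp}. Your extra sanity check that the quantifiers in~(\ref{eq:E}) and~(\ref{eq:F}) spare the first and last segments of presence intervals is a correct and welcome detail the paper leaves implicit.
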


\subsection{Experiments.}

We have evaluated the ILP in all three models using Open Street Map
data of the city center of Karlsruhe (Germany) which contains more
than 2,000 labels.  To this end we generated 1,000 shortest paths on
the road network of Karlsruhe by selecting source and target vertices
uniformly at random and transformed those shortest paths into
trajectories consisting of circular arcs.  We fixed the viewport's
size to that of a typical mobile device ($640\times 480$ pixels) and
considered the map scales 1:2000, 1:3000, and~1:4000, which
corresponds to areas with dimensions~$339m\times
254m$,~$508m\times381m$ and~$678m\times 508m$, respectively.  The
experiments were performed on a single core of an AMD Opteron 6172
processor running Linux~3.4.11. The machine is clocked at 2.1 Ghz, and
has 256 GiB RAM. Our implementation is written in C++, uses Gurobi
5.1. as ILP solver, and was compiled with GCC 4.7.1 using optimization
\texttt{-O3}.

For plots and a table depicting the results of the experimental
evaluation see Fig.~\ref{fig:evaluation}. We observe that for a scale
factor of~1:2000 the running times for the vast majority of instances
remained below one second, while no instances required more than ten
seconds to be solved.  Since for the scale factors~1:3000 and~1:4000
the density of labels increases, the running times increase,
too. Still 75\% of the instances were solved in less than three
seconds.

It is remarkable that for the scale factor~1:2000 over~99\% of the
instances and for the scale factor~1:3000 over 75\% of the instances
can be solved in less than a second, while for the scale factor~1:4000
over~75\% of the instances can still be solved in less than three
seconds.  However, for~1:3000 there are runs that needed almost~50
seconds and for~1:4000 there are runs that needed almost~475 seconds.
Note that due to these outliers, for a scale factor of~1:4000 the
average running time lies above the third quartile, but still does not
exceed six seconds.  Two instances for 1:4000 exceeded a timeout of
600 seconds, were aborted and not included in the analysis.  For the
scale factors~1:2000 and~1:3000 the average running time is less than
one second. Considering the same scale factor the three models do not
differ much from each other, except for some outliers.  As the number
of labels and conflicts to be considered depends on the applied scale
factor and the concrete trajectory the table summarizes the number of
considered labels and conflicts in maximum and average over all
trajectories. As to be expected for a scale factor of~1:4000 the
number of considered conflicts is significantly greater than for a
scale factor of~1:2000. This also explains the different running
times.

In conclusion, our brief evaluation indicates that the ILP
formulations are indeed applicable in practice.

\begin{figure}
 \centering 
 \subfloat[AM1]{\includegraphics[width=.45\textwidth]{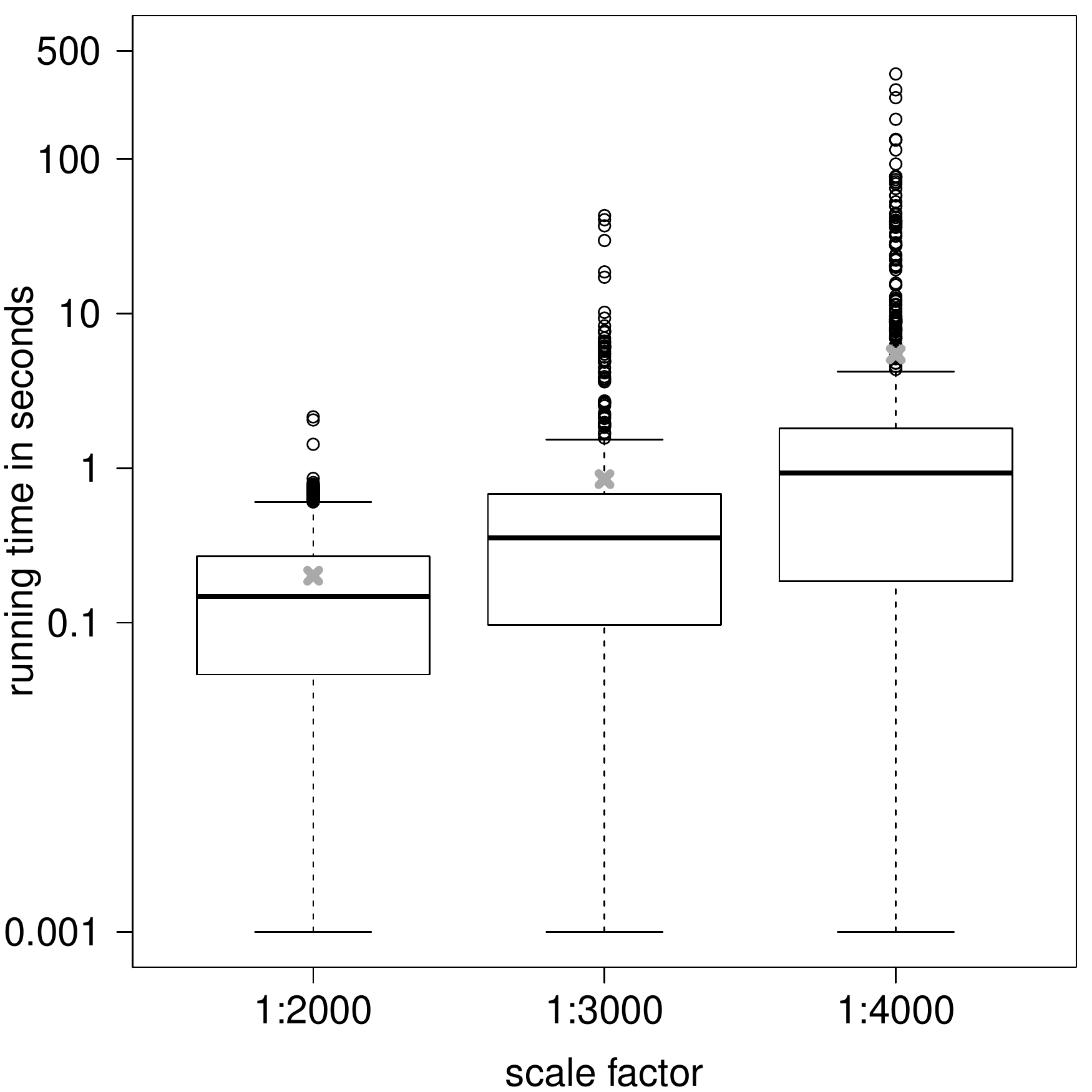}}
\hfill 
\subfloat[AM2]{\includegraphics[width=.45\textwidth]{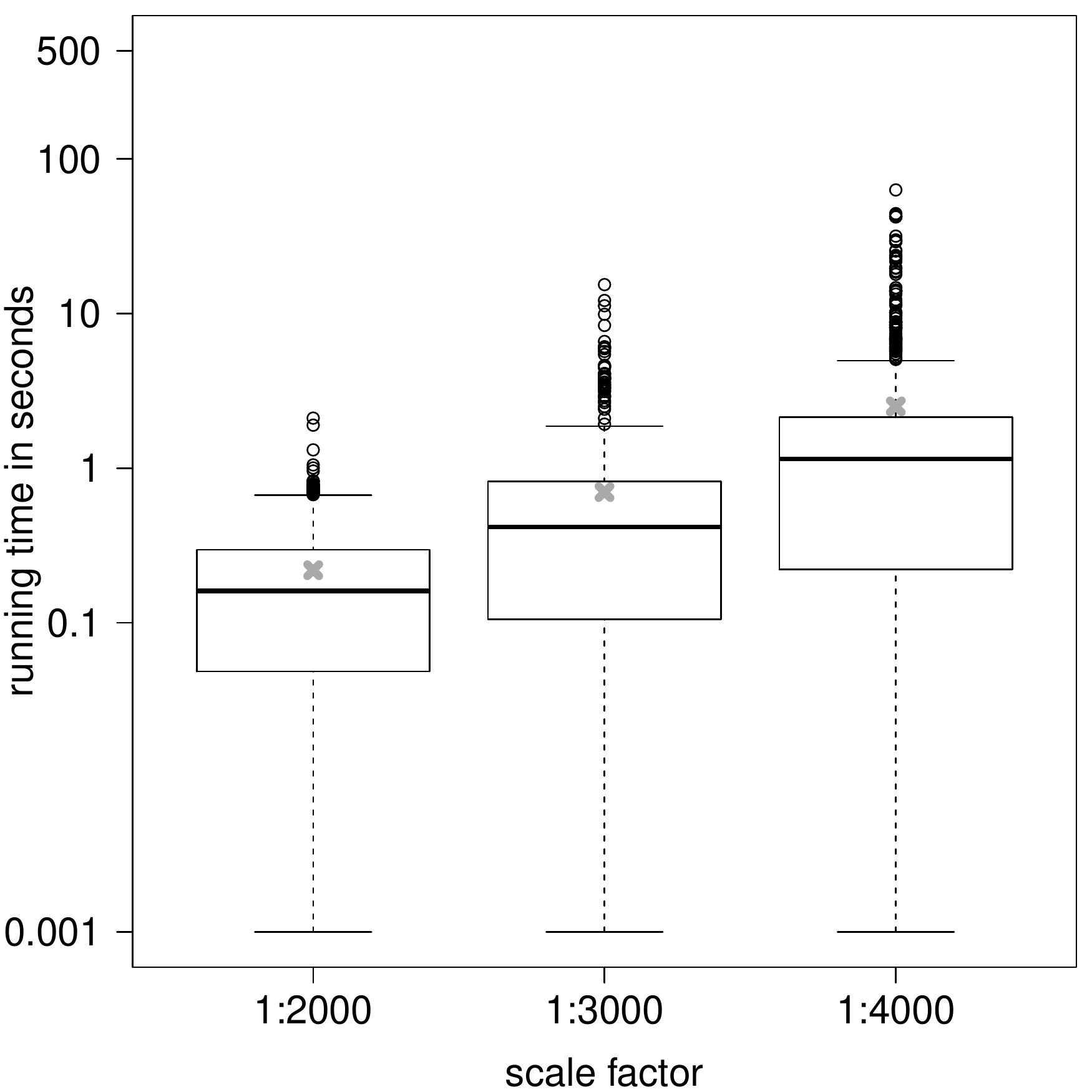}}\\
\subfloat[AM3]{\includegraphics[width=.45\textwidth]{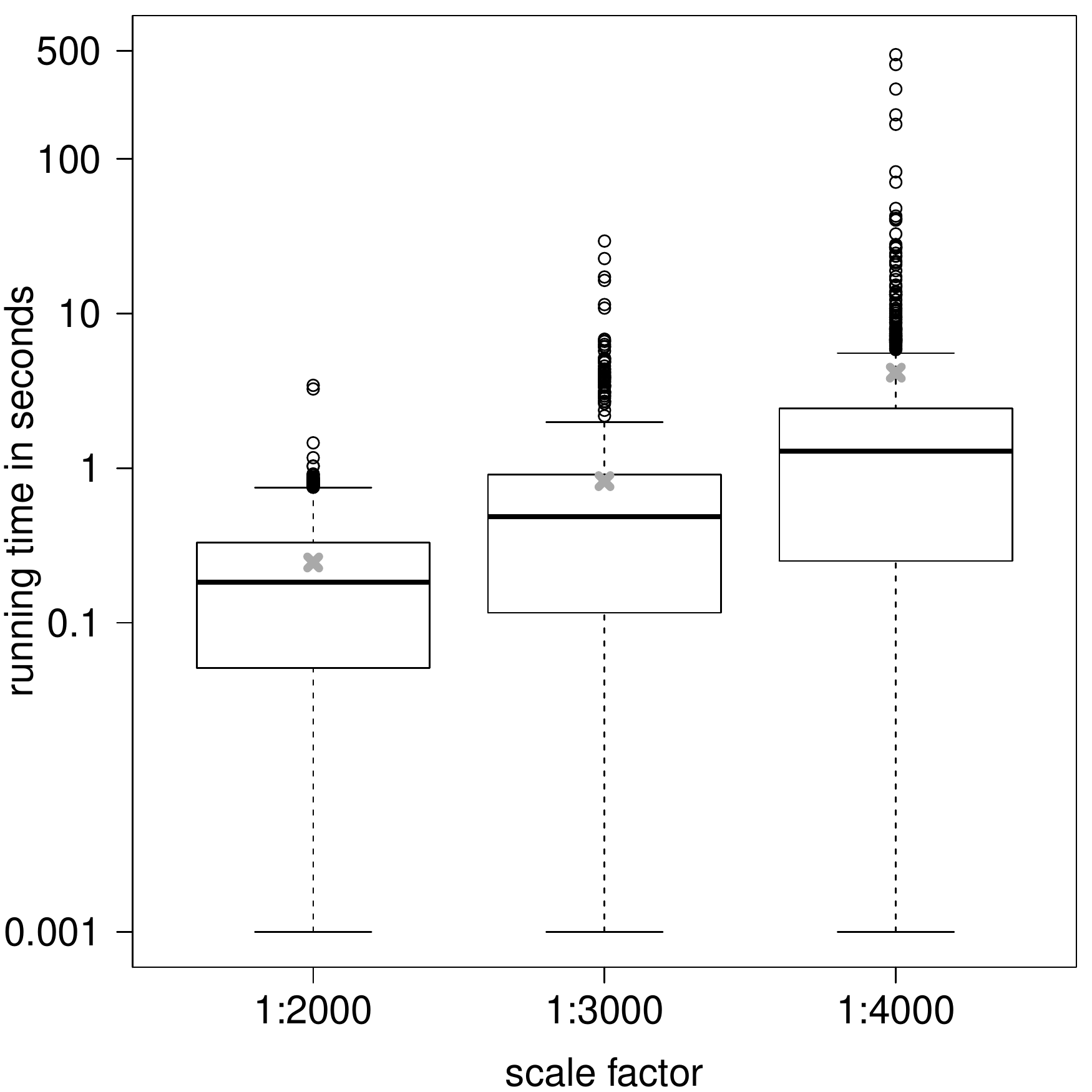}}
\hfill 
\subfloat[Statistics for $1,000$
trajectories]{\includegraphics{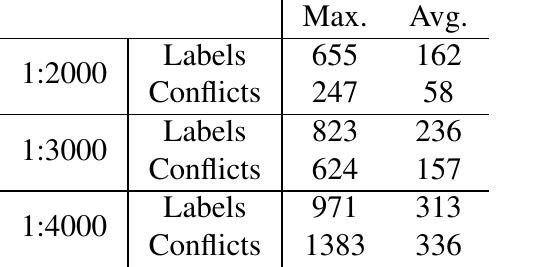}}

 \caption{\small Results of the experimental evaluation. In order to
limit the vertical axis in the plots we
rounded up all running times below 1ms to 1ms. 
 }
\label{fig:evaluation}
\end{figure}

\subsection{Approximation of \GeneralMaxTotal}
In this section we describe a simple greedy algorithm for
\GMT in all three activity models assuming that all labels
are unit squares anchored at their lower-left corner. Further, we
assume that the weight of each presence interval $[a, b]_\ell$ is its
length $w([a, b]_\ell) = b-a$.

Starting with an empty solution~$\activity$, our algorithm
\GreedyMaxTotal removes the longest interval~$I$ from~$\presence$ and
adds it to~$\activity$, i.e.,~$I$ is set active. Then, depending on
the activity model, it updates all presence intervals that have a
conflict with $I$ in $\Psi$ and continues until the set $\presence$ is
empty.

For AM1 the update process simply removes all presence intervals from
$\Psi$ that are in conflict with the newly selected interval $I$.
For AM2 and AM3 let~$I_j\in \presence$ and let~$I^1_j,\ldots,I^k_j$ be
the longest disjoint sub-intervals of~$I_j$ that are not in conflict
with the selected interval $I$. We assume that~$I^1_j,\ldots,I^k_j$
are sorted by their left endpoint. The update operation for AM2
replaces every interval $I_j \in \presence$ that is in conflict
with~$I$ with~$I^1_j$. In AM3 we replace~$I_j$ by~$I_j^1$, if~$I_j^1$
is not fully contained in~$I$. Otherwise, $I_j$ is replaced
by~$I^k_j$. Note that this discards some candidate intervals, but the
chosen replacement of $I_j$ is enough to prove the approximation
factor.  Note that after each update all intervals in $\Psi$ are valid
choices according to the specific model. Hence, we can conclude that
the result~$\Phi$ of \GreedyMaxTotal is also valid in that model.

In the following we analyze the approximation quality
of~\GreedyMaxTotal. To that end we first introduce a purely geometric
packing lemma. Similar packing lemmas have been introduced before,
but to the best of our knowledge for none of them it is
sufficient that only one prescribed corner of the packed objects lies
within the container.

\newcommand{\lempackingsquarestext}{
Let~$C$ be a circle of radius~$\sqrt 2$ in the plane and let
$Q$ be a set of non-intersecting closed and axis-parallel unit squares
with their bottom-left corner in~$C$. Then~$Q$ cannot contain more
than eight squares.
}

\begin{lemma}\label{lem:packing:squares}
\lempackingsquarestext
\end{lemma}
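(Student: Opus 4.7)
The plan is to reduce the lemma to a point-packing problem and then refine a $3 \times 3$ grid cover by exploiting the disk's curvature at its four corner cells. Two closed axis-parallel unit squares with bottom-left corners $p_i, p_j$ are disjoint as closed sets iff $\|p_i-p_j\|_\infty>1$ strictly; hence it suffices to show that the closed disk $C$ of radius $\sqrt 2$, which I may WLOG center at the origin, contains at most eight points with pairwise $L_\infty$ distance strictly greater than $1$. First I would cover $C$ by the nine half-open unit cells $T_{ij} := [i-\tfrac12,\,i+\tfrac12) \times [j-\tfrac12,\,j+\tfrac12)$ for $i,j \in \{-1,0,1\}$. Since $\sqrt 2 < \tfrac32$ these cells cover $C$, and two points inside the same cell have both coordinate differences below $1$, so each $T_{ij}$ contains at most one point of $Q$. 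This already gives $|Q|\le 9$.

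To rule out $|Q|=9$, I would assume every cell $T_{ij}$ contains a point $p_{ij}$ and derive a contradiction from the four corner cells. For $p_{11}\in T_{11}$: the $y$-coordinates of $p_{11}$ and $p_{01}$ both lie in $[\tfrac12,\tfrac32)$ and thus differ by less than $1$, so the $L_\infty$-condition forces $x_{11} > x_{01}+1$; symmetrically $y_{11} > y_{10}+1$. Since $x_{01}+1$ and $y_{10}+1$ are at least $\tfrac12$, squaring preserves the strict inequalities, and $p_{11}\in C$ gives
\[(x_{01}+1)^2 + (y_{10}+1)^2 \;<\; x_{11}^2 + y_{11}^2 \;\le\; 2.\]
Analogously, the three other corner cells $T_{-1,1}$, $T_{1,-1}$, $T_{-1,-1}$ yield the strict inequalities $(x_{01}-1)^2 + (y_{-1,0}+1)^2 < 2$, $(x_{0,-1}+1)^2 + (y_{10}-1)^2 < 2$, and $(x_{0,-1}-1)^2 + (y_{-1,0}-1)^2 < 2$.

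The decisive step is to sum these four strict inequalities. Setting $u=x_{01}$, $v=x_{0,-1}$, $s=y_{10}$, $t=y_{-1,0}$ and applying the identity $(a+1)^2+(a-1)^2=2a^2+2$ to each of the four variables, the sum collapses to $2(u^2+v^2+s^2+t^2)+8 < 8$, forcing $u^2+v^2+s^2+t^2 < 0$, which is impossible. Hence at least one cell must be empty, so $|Q|\le 8$. I expect the main technical obstacle to be carefully preserving strictness of the four corner inequalities through squaring and summation; this hinges on the sign information that half-open cell membership provides, namely that $x_{01}+1$, $y_{10}+1$, etc., are bounded away from zero with the correct sign so that squaring is monotone in the required direction for each of the four corners.
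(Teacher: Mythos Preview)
Your proof is correct and takes a genuinely different route from the paper's. The paper argues geometrically with stabbing lines: it first shows that any vertical line can stab at most three of the squares (since the anchors span at most $2\sqrt{2}$ vertically), partitions $Q$ into three vertical strips $Q_l,Q_m,Q_r$ each stabbed by a common line to get $|Q|\le 9$, and then rules out $|Q|=9$ via a somewhat delicate case analysis involving three specific vertical lines $s_l,s_m,s_r$ through and near the center of $C$.

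You instead reduce immediately to an $L_\infty$-separated point set, cover the disk by a $3\times 3$ grid of half-open unit cells to get $|Q|\le 9$, and then eliminate the nine-point case by an algebraic identity: each corner point is forced, via its row- and column-neighbor in the grid, to satisfy a strict quadratic bound, and summing the four corner bounds collapses via $(a+1)^2+(a-1)^2=2a^2+2$ to the impossible $u^2+v^2+s^2+t^2<0$. The sign bookkeeping you flag as the main obstacle is indeed handled correctly by the half-open cells: for each corner the two relevant shifted coordinates are bounded away from zero with the right sign, so squaring is monotone in the needed direction. Compared to the paper, your argument is shorter and entirely coordinate-based, with no geometric case analysis; the paper's proof, on the other hand, is more visual and perhaps easier to discover without the algebraic trick. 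Either way, the bound of eight is tight (the paper exhibits a configuration of eight squares).
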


\begin{proof}
  First, we show that~$Q$ cannot contain more than nine squares and
  extend the result to the claim of the lemma. We begin by proving the
  following claim.

  \emph{(S) At most three squares of~$Q$ can be stabbed by a vertical
    line.}  In order to prove (S) let~$Q'\subseteq Q$ be a set of
  squares that is stabbed by an arbitrary vertical line~$l$ and
  let~$q_t$ be the topmost square stabbed by~$l$ and let~$q_b$ be the
  bottommost square stabbed by~$l$. Since both the bottom-left corner
  of $q_t$ and $q_b$ are in $C$, their vertical distance is at
  most~$2 \sqrt{2}$. Consequently, there can be at most one other
  square in~$Q'$ that lies in between~$q_t$ and~$q_b$, which shows the
claim (S).
 \begin{figure}[tb]
      \centering
    \includegraphics{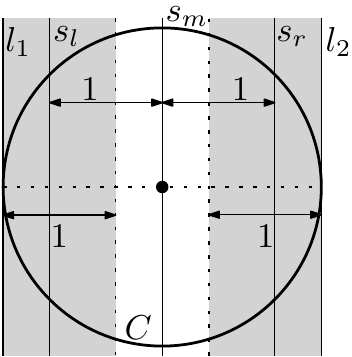}
    \caption{\small Illustration for the proof of
      Lemma~\ref{lem:packing:squares}.}
  \label{fig:packing:squares}
 \end{figure}

  Now let $l_1$ be the left vertical tangent of $C$ and let 
  $l_2$ be its right vertical tangent;
  see Fig.~\ref{fig:packing:squares}. We define~$Q_l \subseteq Q$ to
  be the set of squares whose bottom-left corner has distance of at
  most 1 to~$l_1$.  Hence, there must be a vertical line that stabs
  all squares in~$Q_l$. By (S) it follows that~$|Q_l| \leq 3$.  We can
  analogously define the set $Q_r \subseteq Q$ whose bottom-left
corner has
  distance of at most one to the vertical line~$l_2$. By the same
  argument it follows that $|Q_r| \leq 3$.
  Further, the bottom-left corners of the
  squares~$Q_m=Q\setminus\{Q_l,Q_r\}$ must be contained in a vertical
  strip of width~$2\sqrt{2}-2 < 1$.  Hence, there is a vertical
  line that stabs all squares of~$Q_m$ and $|Q_m|
  \leq 3$ follows.  We conclude that the set $Q$ contains at most nine
  squares; in fact, $|Q| \le 8$ as we show next.

  For the sake of contradiction we assume that $|Q|=9$, i.e.,
  $|Q_l|=|Q_m|=|Q_r|=3$. We denote the topmost square in $Q_l$ by
  $t_l$ and the bottommost square by $b_l$, and define $t_r$ and $b_r$
  for $Q_r$ analogously.
  Further, let~$s_m$ be the vertical line through the center of~$C$,
  let~$s_l$ be the vertical line that lies one unit to the left
  of~$s_m$ and
  let~$s_r$ be the vertical line that lies one unit to the right
  of~$s_m$. Note that the length of the segment of $s_l$ and $s_r$
  that is contained in $C$ has length~2. Since the bottom-left corners
  of~$t_l$ and~$b_l$ must have vertical distance strictly greater than
2,
  both squares must lie to the right of~$s_l$. Hence, $t_l$ and~$b_l$
  intersect~$s_m$. Analogously, the bottom-left corners of~$t_r$
  and~$b_r$ must lie to the left of~$s_r$, and, hence intersect~$s_r$.
  The line~$s_m$ is intersected by two squares of~$Q_l$. By (S) there
  can be at most one additional square of~$Q_m$ that
  intersects~$s_m$. Thus, there must be two squares in $Q_m$ whose
  anchors lie to the right of~$s_m$. But then they both
  intersect~$s_r$ which itself is already intersected by at least the
  squares $t_r$ and $b_r$. This is a contradiction to (S), and
concludes the
  proof. \qed
\end{proof}

Fig.~\ref{fig:eight_squares:example} shows
that the bound %
is tight.
\begin{figure}[tb]
  \centering
  \includegraphics[page=2]{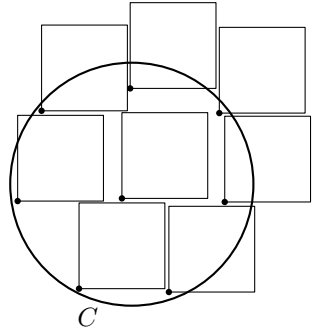}
  \caption{\small Example configuration of eight axis-aligned,
    non-intersecting, unit-squares with their bottom-left corner
    inside a circle $C$ with radius $\sqrt{2}$.}
\label{fig:eight_squares:example}
\end{figure}
Based on Lemma~\ref{lem:packing:squares} we now show that for
any label with anchor $p$ there is no point of time $t \in [0,
1]$ for which there can be more than eight active labels whose anchors
are within distance $\sqrt{2}$ of $p$.
We call a set $X \subseteq \presence$ \emph{conflict-free} if it
contains no pair of presence intervals that are in conflict. Further,
we say that $X$ is in conflict with $I \in
\presence$ if every element of $X$ is in conflict with $I$, and we say
that $X$ contains $t \in [0, 1]$ if every element of $X$ 
contains~$t$.

\begin{lemma}\label{lem:setX}
  For every $t\in [0, 1]$ and every $I\in \presence$ any maximum
  cardinality conflict-free set $X_I(t) \subseteq \presence$ that is in conflict with $I$ and contains $t$ satisfies $|X_I(t)| \le 8$.
\end{lemma}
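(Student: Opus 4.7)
The plan is to reduce the statement to the packing Lemma~\ref{lem:packing:squares} by examining the configuration at time~$t$ in a frame that co-rotates with the viewport.

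Let $\ell_I$ be the label to which $I$ belongs and let $p_I$ be its anchor. The first step is to show that every label $\ell_J$ whose presence interval $J$ lies in $X_I(t)$ satisfies $\|p_J - p_I\| \le \sqrt{2}$. Since $J$ is in conflict with $I$, there is a time $t'$ at which $\ell_I(t')$ and $\ell_J(t')$ intersect. Both rectangles are unit squares obtained by rotating the canonical axis-parallel squares around $p_I$ and $p_J$ by the same angle $\alpha(t')$, so rotating the whole plane by $-\alpha(t')$ yields two overlapping axis-parallel unit squares. Their lower-left corners---the images of $p_I$ and $p_J$---therefore have $L^\infty$-distance at most $1$ and hence Euclidean distance at most $\sqrt{2}$; since rotations preserve distances, $\|p_J - p_I\| \le \sqrt{2}$ as claimed.

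Next I would pass to the frame rotated by $-\alpha(t)$. In this frame every label $\ell_J(t)$ appears as an axis-parallel unit square with lower-left corner at the image $p_J'$ of $p_J$. By the previous step, all such $p_J'$ for $J \in X_I(t)$ lie inside the closed disk of radius $\sqrt{2}$ centered at the image of $p_I$. Moreover, since $X_I(t)$ is conflict-free and every interval in it contains $t$, no two of its labels can overlap at time $t$: otherwise the corresponding pair of presence intervals would be in conflict, contradicting conflict-freeness. Hence we obtain a family of pairwise non-intersecting axis-parallel unit squares with lower-left corners in a circle of radius $\sqrt{2}$, and Lemma~\ref{lem:packing:squares} directly yields $|X_I(t)| \le 8$.

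The most delicate step is the translation of the \emph{combinatorial} notion of a conflict between two presence intervals into the \emph{geometric} distance bound on anchors, in particular keeping track of the fact that both labels rotate by the same angle at any fixed time so that one may apply an isometry to the plane without changing any overlap relation. Once this is settled, the packing lemma does the rest.
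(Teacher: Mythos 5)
Your proposal is correct and follows essentially the same route as the paper's proof: bound the anchor distance of any conflicting pair of unit-square labels by $\sqrt 2$, observe that conflict-freeness of $X_I(t)$ together with the fact that all its intervals contain $t$ forces the corresponding squares to be pairwise disjoint at time $t$, and then invoke Lemma~\ref{lem:packing:squares}. The only difference is presentational — the paper argues by contradiction and leaves the rotation-invariance details implicit, whereas you spell them out — so there is nothing to add.
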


\begin{proof}
  Assume that there is a time~$t$ and an interval~$I$ such that there
  is a set~$X_I(t)$ that contains more than eight
  intervals. Let~$\ell$ be the label that corresponds to~$I$. For an
  interval~$I'\in X_I(t)$ to be in conflict with~$I$ the anchors of
  the two corresponding labels must have a distance of at
  most~$\sqrt{2}$. Hence, there are~$|X_I(t)|$ labels corresponding to
  the intervals in~$X_I(t)$ with anchors of distance at
  most~$\sqrt{2}$ to the anchor of~$\ell$. By
  Lemma~\ref{lem:packing:squares} we know that two of these labels
  must overlap. This implies that there is a conflict between the
  corresponding intervals contained in~$X_I(t)$, which is a
  contradiction. \qed
\end{proof}
With this lemma we can finally obtain the approximation guarantees
for \GreedyMaxTotal for all activity models.

\begin{theorem}
  Assuming that all labels are unit squares and~$w([a,b])=b-a$,
\GreedyMaxTotal is a 1/24-, 1/16-, 1/8-approximation for AM1--AM3,
respectively, and needs $O(n\log n)$ time for AM1 and
  $O(n^2)$ time for AM2 and AM3.
 \label{thm:maxtotal:apx}
\end{theorem}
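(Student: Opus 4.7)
My plan is a charging argument comparing the output $\Phi_G$ of \GreedyMaxTotal to an optimum $\Phi^*$. For each $I^* \in \Phi^*$ I charge its weight to a greedy interval: to $I^*$ itself if $I^* \in \Phi_G$, and otherwise to the greedy pick $I$ that discarded $I^*$ in AM1 or that triggered the truncation of $I^*$'s presence interval in AM2 and AM3. Writing $\mathrm{charge}(I)$ for the total weight assigned to $I$, it suffices to prove $\mathrm{charge}(I) \le c \cdot w(I)$ for every $I \in \Phi_G$, where $c = 24$, $16$, $8$ in AM1, AM2, AM3 respectively; summing then yields $w(\Phi^*) \le c \cdot w(\Phi_G)$. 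A useful preliminary remark is that whenever $I \in \Phi_G \cap \Phi^*$, the set $\Phi^*$ being conflict-free forces no other OPT interval to conflict with $I$, so $I$'s selection neither discards nor truncates any OPT interval and $\mathrm{charge}(I)=w(I)$; only the buckets with $I \in \Phi_G \setminus \Phi^*$ consume a nontrivial share of the budget.

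The crucial per-pick bound is geometric. Fix $I = [a,b] \in \Phi_G$ and let $S_I$ collect the OPT intervals charged to $I$. Because greedy always picks the longest interval still present in $\presence$, each $I^* \in S_I$ satisfies $w(I^*) \le w(I)$; because $I^*$ is in conflict with $I$, one also has $I^* \cap I \ne \emptyset$, and hence $I^* \subseteq [a-|I|, b+|I|]$, a window $W$ of length $3|I|$. Since $S_I \subseteq \Phi^*$ is conflict-free and every element of $S_I$ is in conflict with $I$, Lemma~\ref{lem:setX} ensures that at any time $t$ at most $8$ intervals of $S_I$ contain $t$. Integrating the indicator functions of the intervals of $S_I$ over $W$ gives $\sum_{I^* \in S_I} w(I^*) \le 8 \cdot 3|I| = 24\,w(I)$, which establishes the AM1 ratio.

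For AM2 and AM3 the same packing tool applies to a shrunken window, because the update rules preserve part of each conflicting interval. In AM2 the replacement of a conflicting interval by its first non-conflicting subinterval $I^{*,1}$ means that only the suffix of $I^*$ starting at its first conflict with $I$ is ``lost''; that suffix is contained in $[a, b+|I|]$, a window of length $2|I|$, giving $\mathrm{charge}(I) \le 16\,w(I)$. In AM3, the additional option to keep the last non-conflicting piece $I^{*,k}$ whenever $I^{*,1}$ lies inside $I$ guarantees that the retained replacement always covers the part of $I^*$ outside $I$, so the lost portion lies inside $[a,b]$ itself, giving $\mathrm{charge}(I) \le 8\,w(I)$. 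The running times follow from routine bookkeeping: sort presence intervals by decreasing length in $O(n \log n)$; for AM1 each interval is touched a constant number of times during the sweep through precomputed conflict lists, yielding $O(n \log n)$ overall, while in AM2 and AM3 each of the up to $n$ greedy picks can trigger $\Theta(n)$ truncations via the $I^1$/$I^k$ rule, giving $O(n^2)$.

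The main obstacle I expect is making the AM2 and AM3 window claims precise for an optimal solution $\Phi^*$ whose activity breakpoints need not coincide with those induced by the greedy process. Since greedy progressively truncates $\presence$, the active interval that OPT assigns to a label $\ell$ may have different endpoints than what greedy eventually sees for $\ell$, so one has to argue carefully that charging the entire weight of an OPT interval into the claimed narrower window is legitimate. This is exactly where the specific replacement rules defining $I^1_j$ and $I^k_j$ have to be invoked, to relate OPT's activity interval on each label to the truncated presence interval remaining in greedy's $\presence$ at the moment of the responsible pick.
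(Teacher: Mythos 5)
Your proposal is correct and follows essentially the same route as the paper's proof: a per-selection charging/cost argument in which the weight of optimal intervals lost to a greedy pick $I$ is confined to a window of length $3w(I)$, $2w(I)$, or $w(I)$ (depending on the activity model's update rule) and bounded pointwise by $8$ via Lemma~\ref{lem:setX}, followed by the same heap-based and sweep-based running-time analysis. The subtlety you flag about relating OPT's activity intervals to greedy's truncated presence intervals in AM2/AM3 is handled in the paper at the same level of informality, by observing that the retained prefix (or suffix) of each conflicting interval confines the change to $\presence$ to the stated window.
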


\begin{proof}
To show the
  approximation ratios, we consider an arbitrary step of
  \GreedyMaxTotal in which the presence interval~$I = [a,b]_\ell$ is
  selected from~$\presence$. Let $C^I_\ell$ be the set
  of presence intervals in $\presence$ that are in conflict with $I$.
  
  Consider the model AM1. Since~$I$ is the longest interval
  in~$\presence$ when it is chosen, the intervals in $C_\ell^{I}$ must
  be completely contained in~$J=[a-w(I),b+w(I)]$. As~$C_\ell^{I}$
  contains all presence intervals that are in conflict with $I$ it is
  sufficient to consider~$J$ to bound the effect of
  selecting~$I$. Obviously, the interval~$J$ is three times as long
  as~$I$.  By Lemma~\ref{lem:setX} we know that for any $X_I(t)$ it holds that $|X_{I}(t)|\leq 8$ for
  all~$t\in J$. Hence, in an optimal solution there can be at most
  eight active labels at each point~$t\in J$ that are discarded
  when~$[a,b]_\ell$ is selected.  Thus, the cost of
  selecting~$[a,b]_\ell$ is at most~$3\cdot 8\cdot w(I)$.

  For AM2 we apply the same arguments, but restrict the
  interval $J$ to $J =[a,b+w(I)]$, which is only twice as long as~$I$.
  To see that consider for an interval $[c,d]_{\ell'}\in
  C_\ell^{I}$ the prefix~$[c,a]$ if it exists. If $[c,a]$ does not
  exist (because $a<c$), removing~$[c,d]_{\ell'}$ from~$\presence$
  changes~$\presence$ only in the range of~$J$. If~$[c,a]$ exists,
  then again~$\presence$ is only changed in the range of~$I$, because
  by definition~$[c,d]_{\ell'}$ is shortened to an interval that at
  least contains~$[c,a]$ and is still contained in $\presence$. Thus,
  the cost of selecting~$I$ is at most $2 \cdot 8w(I)$.

  Analogously, for AM3 we can argue that it is sufficient to consider
  the interval~$J=[a,b]$. By definition of the update operation of
  \GreedyMaxTotal at least the prefix or suffix subinterval of each
  $[c,d]_{\ell'}\in C_\ell^{I}$ remains in $\presence$ that extends
  beyond~$I$ (if such an interval exists). Thus, selecting~$I$
  influences only the interval~$J$ and its cost is at most $8w(I)$.
  The approximation bounds of $1/24$, $1/16$, and $1/8$ follow
  immediately.

  We use a heap to achieve the time complexity $O(n \log n)$ of
  \GreedyMaxTotal for AM1
  since each interval is inserted and removed
  exactly once.  
  For AM2 and AM3 we use a linear sweep to identify the
  longest interval contained in $\Psi$.  In each step we need $O(n)$
  time to update all intervals in $\Psi$, and we need a total of
  $O(n)$ steps.  Thus, \GreedyMaxTotal needs $O(n^2)$ time in total
  for AM2 and AM3. \qed
\end{proof}

\section{Solving \kRestrictedMaxTotal}
Corollary~\ref{cor:ilp} showed that \kRestrictedMaxTotal can be solved
by integer linear programming in all activity models. In this section
we prove that unlike \GMT the problem \kRestrictedMaxTotal
can actually be solved in polynomial time.  We give a detailed
description of our algorithm for AM1, and then show how it can be
extended to AM2.  Note that solving \kRestrictedMaxTotal is related to
finding a maximum cardinality $k$-colorable subset of $n$ intervals in
interval graphs.  This can be done in polynomial time in both $n$ and
$k$~\cite{Carlisle1995225}. However, we have to consider additional
constraints due to conflicts between labels, which makes our problem
more difficult.  First, we discuss how to solve the case for $k=1$,
then give an algorithm that solves \kRestrictedMaxTotal for $k = 2$,
and extend this result recursively to any constant $k > 2$. Since the
running times of the presented algorithms are, even for small k,
prohibitively expensive in practice, we finally propose an
approximation algorithm for~\kRestrictedMaxTotal.

\subsection{An Algorithm for \TwoRestrictedMaxTotal in AM1}
\label{sub:2restricted}

We start with some definitions before giving the actual algorithm.  We
assume that the intervals of $\presence = \{I_1, \dots, I_n\}$ are
sorted in non-decreasing order by their left endpoints; ties are
broken arbitrarily.  First note that for the case that at most one
label can be active at any given point in time ($k=1$), conflicts
between labels do not matter. Thus, it is sufficient to find an
independent subset of $\presence$ of maximum weight.  This is
equivalent to finding a maximum weight independent set on interval
graphs, which can be done in~$O(n)$ time using dynamic programming
given $n$ sorted intervals~\cite{htc-eafmw2ig-92}. We denote this
algorithm by $\mathcal A_1$. Let $L_1[I_j]$ be the set of intervals
that lie completely to the left of the left endpoint of
$I_j$. Algorithm $\mathcal A_1$ basically computes a table $\T_1$
indexed by the intervals in $\presence$, where an entry $\T_1[I_j]$
stores the value of a maximum weight independent set $Q$ of $L_1[I_j]$
and a pointer to the rightmost interval in $Q$.

We call a pair of presence
intervals~$(I_{i},I_{j})$, %
$i<j$, a \emph{separating pair} if~$I_i$ and $I_j$ overlap and are not
in conflict with each other. Further, a separating pair $\vec v =
(I_p, I_q)$ is \emph{smaller} than another separating pair $\vec w =
(I_i, I_j)$ if and only if $p < i$ or $p = i$ and $q < j$. This
induces a total order and we denote the ordered set of all separating
pairs by $S_2 {=} \{\vec v_1, \ldots, \vec v_z\}$. The weight of a
separating pair $\vec v$ is defined as $w(\vec v) = \sum_{I\in \vec v}
w(I)$.

We observe that a separating pair $\vec v = (I_i, I_j)$ contained in a
solution of \TwoRestrictedMaxTotal splits the set of presence
intervals into two independent subsets. Specifically, a left (right)
subset $L_2[\vec v]$ ($R_2[\vec v]$) that contains only intervals which lie
completely to the left (right) of the intersection of $I_i$ and $I_j$
and are neither in conflict
with $I_i$ nor $I_j$; see Fig.~\ref{fig:example_separating_pair}.

\begin{figure}[tb]
 \centering
 \scalebox{1.07}{\includegraphics[page=1]{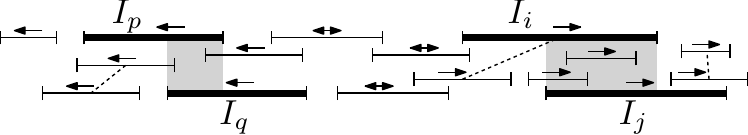}}
 \caption{\small Illustration of presence intervals. Intervals that
   are in
   conflict are connected by a dotted line.  Both $(I_{i}, I_{j})$ and
   $(I_{p}, I_{q})$ are separating pairs.  The intervals of~$L_2[i,j]$
   ($R_2[p,q]$) are marked by a left (right) arrow.}
 \label{fig:example_separating_pair}
\end{figure}

We are now ready to describe our dynamic programming algorithm
$\mathcal A_2$.  For ease of notation we add two dummy separating
pairs to $S_2$. One pair $\vec v_0$ with presence intervals strictly
to the left of $0$ and one pair $\vec v_{z+1}$ with presence intervals
strictly to the right of $1$.  Since all original presence intervals
are completely contained in $[0, 1]$ every optimal solution contains
both dummy separating pairs.  Our algorithm computes a one-dimensional
table $\T_2$, where for each separating pair $\vec v$ there is an
entry $\T_2[\vec v]$ that stores the value of the optimal solution for
$L_2[\vec v]$. We compute $\T_2$ from left to right starting with the
dummy separating pair $\vec v_0$ and initialize $\T_2[\vec v_0] =
0$. Then, we recursively define $\T_2[\vec v_j]$ for every $\vec v_j
\in S_2$ as $\T_2[\vec v_j] = \max_{i < j}\{\T_2[\vec v_i] + w(\vec
v_i) + \mathcal A_1(\vec v_i, \vec v_j) \mid \vec v_i \in S_2,\, \vec
v_i \subseteq L_2[\vec v_j],\, \vec v_j \subseteq R_2[\vec
v_i]\}$. Additionally, we store a backtracking pointer to the
predecessor pair that yields the maximum value. In other words, for
computing $\T_2[\vec v_j]$ we consider all possible direct
predecessors $\vec v_i \in S_2$ with $i<j$, $\vec v_i \cap \vec v_j =
\emptyset$, and no conflict with $\vec v_j$. Each such $\vec v_i$
induces a candidate solution whose value is composed of $\T_2[\vec
v_i]$, $w(\vec v_i)$, and the value of an optimal solution of
algorithm $\mathcal{A}_1$ for the intervals between $\vec v_i$ and
$\vec v_j$ with $\vec v_i$ and $\vec v_j$ active.

Since by construction $L_2[\vec v_{z+1}] = \presence\ \cup\ \vec v_0$,
the optimal solution to $2$-\textsc{Restric\-ted\-MaxTotal} is stored
in $\T_2[\vec v_{z+1}]$ once $\vec v_0$ is removed.  To compute a
single entry $\T_2[\vec v_j]$ our algorithm needs to consider all
possible separating pairs preceding $\vec v_j$, and for each of them
obtain the optimal solution from algorithm $\mathcal A_1$ under some
additional constraints.
For the call $\mathcal A_1(\vec v_i, \vec v_j)$ in the recursive
equation above, we distinguish two cases. If the rightmost endpoint of
$\vec v_i$ is to the left of the leftmost endpoint of $\vec v_j$ then
we run algorithm $\mathcal A_1$ on the set of intervals $L_2[\vec v_j]
\cap R_2[\vec v_i]$ and obtain the value $\mathcal A_1(\vec v_i, \vec
v_j)$. Otherwise, there is an overlap between an interval $I_a$ of
$\vec v_i$ and an interval $I_b$ of $\vec v_j$. Since for $k=2$ no
other interval can cross this overlap, we actually make two calls to
$\mathcal A_1$, once on the set $R_2[\vec v_i] \cap L_2[(I_a,I_b)]$
and once on the set $R_2[(I_a,I_b)] \cap L_2[\vec v_j]$. We add both
values to obtain $\mathcal A_1(\vec v_i, \vec v_j)$.  Since we run
algorithm $\mathcal A_1$ for each of $O(z)$ separating pairs, the time
complexity to compute a single entry of $\T_2$ is $O(nz)$. To compute
the whole table the algorithm repeats this step $O(z)$ times, which
yields a total time complexity of $O(nz^2)$. Note that the number of
separating pairs $z$ is in $O(n^2)$.

We prove the correctness of the algorithm by contradiction. Assume
that there exists an instance for which our algorithm does not compute
an optimal solution and let OPT be an optimal solution. This means,
that there is a smallest separating pair $\vec v_j$ for which the
entry in $\T_2[\vec v_j]$ is less than the value of OPT for $L_2[\vec
v_j]$. Note that $\vec v_j$ cannot be the dummy separating pair $\vec
v_0$ since $\T_2[\vec v_0]$ is trivially correct. Let $\vec v_i$ be
the rightmost separating pair in OPT that precedes $\vec v_j$ and is
disjoint from it (possibly $\vec v_i = \vec v_0$). Since there is no
other disjoint separating pair between $\vec v_i$ and $\vec v_j$ in
OPT, all intervals in OPT between $\vec v_i$ and $\vec v_j$ form a
subset of $R_2[\vec v_i] \cap L_2[\vec v_j]$ that is a valid
configuration for $k=1$. We can obtain an optimal solution for $k=1$
of the intervals in $R_2[\vec v_i] \cap L_2[\vec v_j]$ by computing
$\mathcal A_1(\vec v_i, \vec v_j)$ as described above.  Since, by
assumption,~$\T_2[\vec v_i]$ is optimal,~$\mathcal A_1$ is
correct~\cite{htc-eafmw2ig-92}, and our algorithm explicitly considers
all possible preceding separating pairs including $\vec v_i$, the
entry~$\T_2[\vec v_j]$ must be at least as good as OPT for~$L_2[\vec
v_j]$. This is a contradiction and the correctness of $\mathcal A_2$
follows.

\begin{theorem}\label{thm:A2}
  Algorithm $\mathcal A_2$ solves
  \TwoRestrictedMaxTotal in AM1 in
  $O(nz^2)$ time and $O(z)$ space, where $z$ is the number of
  separating pairs in the input instance.
\end{theorem}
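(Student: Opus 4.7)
The plan is to establish the theorem by separately verifying correctness, running time, and space, following the dynamic-programming structure already set up for~$\mathcal A_2$. The central idea is that any feasible $2$-active solution can be decomposed along its sequence of separating pairs into segments in which the two slots are occupied in a structurally simple way, so that the optimum can be assembled by a one-dimensional DP over~$S_2$.

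First I would prove a structural lemma: if $\vec v_i$ and $\vec v_j$ are two consecutive separating pairs appearing in some feasible solution~$\activity$, then the presence intervals of $\activity$ lying strictly between $\vec v_i$ and $\vec v_j$ form a $1$-restricted (independent) subfamily of $R_2[\vec v_i]\cap L_2[\vec v_j]$. The argument is precisely the one sketched in the text: any two intervals in $\activity$ that overlap between $\vec v_i$ and $\vec v_j$ must be conflict-free (AM1 forbids conflicting active labels simultaneously), and would therefore themselves form a separating pair between $\vec v_i$ and $\vec v_j$, contradicting consecutiveness. With this in hand, $\mathcal A_1$ applied to the correct sub-instance computes exactly the best way to fill these ``inner'' intervals. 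The slightly awkward case is when $\vec v_i$ and $\vec v_j$ overlap (so some $I_a\in\vec v_i$ and $I_b\in\vec v_j$ intersect); here I would note that $(I_a,I_b)$ is itself a separating pair lying between $\vec v_i$ and $\vec v_j$, and splitting the inner region at this pair and running $\mathcal A_1$ on each of the two resulting independent sub-instances gives the correct value~$\mathcal A_1(\vec v_i,\vec v_j)$.

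Correctness of $\mathcal A_2$ would then follow by a minimal-counterexample argument over the order on~$S_2$: assume an instance on which $\mathcal A_2$ is wrong, take the smallest pair $\vec v_j$ with $\T_2[\vec v_j]$ sub-optimal, and let $\vec v_i$ be the rightmost separating pair of OPT that is disjoint from $\vec v_j$ and precedes it. The structural lemma gives that the OPT contribution strictly between $\vec v_i$ and $\vec v_j$ equals $\mathcal A_1(\vec v_i,\vec v_j)$, while by minimality $\T_2[\vec v_i]$ matches the OPT value on $L_2[\vec v_i]$; since the recurrence for $\T_2[\vec v_j]$ explicitly ranges over this candidate $\vec v_i$, it yields a value at least as large as OPT restricted to $L_2[\vec v_j]$, a contradiction. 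The answer is then read off as $\T_2[\vec v_{z+1}]-w(\vec v_0)$, and the backtracking pointers reconstruct an activity~$\activity$.

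For the complexity, there are $|S_2|+2\in O(z)$ table entries; each entry $\T_2[\vec v_j]$ ranges over $O(z)$ predecessors, and each predecessor costs one (or two) invocations of $\mathcal A_1$ on a sub-instance of at most $n$ sorted intervals, which takes $O(n)$ time by~\cite{htc-eafmw2ig-92}. The sorted order needed by $\mathcal A_1$ on every sub-instance can be prepared once from the globally sorted $\presence$, so no extra $\log$ factor appears. Summing gives the claimed $O(nz^2)$ bound. For space, the table $\T_2$ uses $O(z)$ cells, and each $\mathcal A_1$ call can be executed in $O(n)$ scratch space that is reused, so $O(z)$ space suffices beyond the input. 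The main obstacle I anticipate is the clean treatment of the overlapping-pair case in the structural lemma, because it requires verifying that inserting the induced pair $(I_a,I_b)$ really does split the inner region into two sub-problems to which $\mathcal A_1$ applies independently; once that splitting is pinned down, the rest of the proof is a routine DP verification.
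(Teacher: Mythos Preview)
Your proposal is correct and follows essentially the same route as the paper: the same decomposition along separating pairs, the same minimal-counterexample correctness argument with $\vec v_i$ chosen as the rightmost disjoint predecessor of $\vec v_j$ in OPT, the same two-call treatment of $\mathcal A_1$ in the overlapping case via the induced pair $(I_a,I_b)$, and the same $O(nz^2)$/$O(z)$ accounting. The only cosmetic difference is that you phrase the key structural observation as an explicit lemma, whereas the paper folds it directly into the contradiction step.
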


\subsection{An Algorithm for \kRestrictedMaxTotal in AM1}
\label{sub:krestricted}

In the following we extend the dynamic programming algorithm $\mathcal
A_2$ to a general algorithm $\mathcal A_k$ for the case $k>2$. To this
end, we extend the definition of separating pairs to separating
$k$-tuples. A \emph{separating $k$-tuple} $\vec v$ is a set of $k$
presence intervals that are not in conflict with each other and that
have a non-empty intersection $Y_{\vec v} = \bigcap_{I \in \vec v}
I$. We say a separating $k$-tuple $\vec v$ is \emph{smaller} than a
separating $k$-tuple $\vec w$ if $Y_{\vec v}$ begins to the left of
$Y_{\vec w}$. Ties are broken arbitrarily. This lets us define the
ordered set $S_k = \{\vec v_1, \ldots, \vec v_z\}$ of all separating
$k$-tuples of a given set of presence intervals. We say a set $C$ of
presence intervals is \emph{$k$-compatible} if no more than $k$
intervals in $C$ intersect at any point and there are no conflicts in
$C$. Two separating $k$-tuples $\vec v$ and $\vec w$ are
\emph{$k$-compatible} if they are disjoint and $\vec v \cup \vec w$ is
$k$-compatible. The definitions of the sets $R_2[\vec v]$ and
$L_2[\vec v]$ extend naturally to the sets $R_k[\vec v]$ and $L_k[\vec
v]$ of all intervals completely to the right (left) of $Y_{\vec v}$
and not in conflict with any interval in $\vec v$. Now, we recursively
define the algorithm~$\mathcal A_k$ that solves \kRestrictedMaxTotal
given a pair of active $k$-compatible boundary $k$-tuples. Note that
in the recursive definition these boundary tuples may remain
$k$-dimensional even in $\mathcal A_{k'}$ for $k' < k$. For $\mathcal
A_k$ we define as boundary tuples two $k$-compatible dummy separating
$k$-tuples $\vec v_{0}$ and $\vec v_{z + 1}$ with all presence
intervals strictly to the left of $0$ and to the right of $1$,
respectively. The algorithm fills a one-dimensional table $\T_k$.
Similarly to the case $k=2$, each entry $\T_k[\vec v]$ stores the
value of the optimal solution for $L_k[\vec v]$, i.e., the final
solution can again be obtained from $\T_k[\vec v_{z+1}]$. We
initialize $\T_k[\vec v_{0}] = 0$.  Then, the remaining entries of
$\T_k$ can be obtained by computing $\T_k[\vec v_j] = \max_{i <
  j}\{\T_k[\vec v_i] + w(\vec v_i) + \mathcal{A}_{k-1}(\tilde {\vec
  v}_i,\tilde {\vec v}_j) \mid \vec v_i \in S_k,\, \vec v_i \subseteq
L_k[\vec v_j] \cup \vec v_0,\, \vec v_j \subseteq R_k[\vec v_i] \cup
\vec v_{z+1},\, \vec v_0 \cup \vec v_{z+1} \cup \vec v_i \cup \vec v_j
\text{ is } k\text{-compatible}\}$, which uses the algorithm $\mathcal
A_{k-1}$ recursively on a suitable subset of presence intervals
between the boundary tuples $\tilde {\vec v}_i$ and $\tilde {\vec
  v}_j$. Here $\tilde {\vec v}_i$ is defined as the union of the tuple
$\vec v_i$ and all intervals in $\vec v_0 \cup \vec v_{z+1}$ that
intersect the right endpoint of $Y_{\vec v_i}$; analogously $\tilde
{\vec v}_j$ is defined as the union of the tuple $\vec v_j$ and all
intervals in $\vec v_0 \cup \vec v_{z+1}$ that intersect the left
endpoint of $Y_{\vec v_i}$.  This makes sure that in each subinstance
all active intervals that are relevant for that particular subinstance
are known. Note that by the $k$-compatibility condition $\tilde {\vec
  v}_i$ and $\tilde {\vec v}_j$ contain at most $k$ elements each.  In
fact, $\mathcal A_{k-1}(\tilde{\vec v}_i,\tilde{\vec v}_j)$ uses
$\tilde{\vec v}_i$ and $\tilde{\vec v}_j$ as boundary $k$-tuples (and
thus does not create dummy boundary tuples) and the set $R_k[\vec v_i]
\cap L_k[\vec v_j]$ as the set of presence intervals from which
separating $(k-1)$-tuples can be formed.

\newcommand{\thmkrestrictedamonetext}{Algorithm $\mathcal{A}_k$ solves
  \kRestrictedMaxTotal in AM1 in $O(n^{k^2 + k - 1})$ time and
  $O(n^k)$ space.}

\begin{theorem}
\label{thm:krestricted:am1}
\thmkrestrictedamonetext
\end{theorem}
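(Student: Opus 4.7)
The plan is to prove both correctness and the resource bounds by induction on $k$. The base case $k=1$ is given by the correctness and $O(n)$ running time of $\mathcal{A}_1$ on sorted intervals~\cite{htc-eafmw2ig-92}, and the case $k=2$ is covered by Theorem~\ref{thm:A2}. For the inductive step I would assume $\mathcal{A}_{k-1}$ correctly solves its problem within its claimed bounds, and then argue first correctness and then running time for $\mathcal{A}_k$.

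The structural key for correctness is the following observation: in any feasible solution $S$ for \kRestrictedMaxTotal, the times at which exactly $k$ intervals of $S$ are simultaneously active induce a well-defined increasing sequence of separating $k$-tuples $\vec v_{j_1} < \cdots < \vec v_{j_m}$ used by $S$, and on every subinterval strictly between two consecutive such tuples (as well as before $\vec v_{j_1}$ and after $\vec v_{j_m}$) at most $k-1$ intervals of $S$ are simultaneously active. Hence the restriction of $S$ to the intervals strictly between $\vec v_{j_\ell}$ and $\vec v_{j_{\ell+1}}$ is feasible for a $(k-1)$-restricted subproblem --- with the subtlety that some intervals of the two flanking $k$-tuples may remain active past the tuple boundaries. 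The boundary tuples $\tilde{\vec v}_i, \tilde{\vec v}_j$ used in the recursive call $\mathcal{A}_{k-1}(\tilde{\vec v}_i, \tilde{\vec v}_j)$ are defined precisely to propagate this information, together with any interval from the outer dummies $\vec v_0 \cup \vec v_{z+1}$ that is still active at the boundary. The correctness argument then follows the template used for $\mathcal{A}_2$: assume some entry $\T_k[\vec v_j]$ is suboptimal, pick the smallest such $\vec v_j$ with respect to some optimal solution OPT, let $\vec v_i$ be the rightmost separating $k$-tuple of OPT preceding $\vec v_j$, and use the optimality of $\T_k[\vec v_i]$ together with the inductive correctness of $\mathcal{A}_{k-1}(\tilde{\vec v}_i,\tilde{\vec v}_j)$ to derive a contradiction.

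For the resource bounds, a separating $k$-tuple is a choice of $k$ presence intervals, so $|S_k| = O(n^k)$; this yields the $O(n^k)$ space bound for $\T_k$, and since space is freed once a recursive call returns, the total space across all levels is still dominated by the top-level table. Filling a single entry of $\T_k$ iterates over $O(n^k)$ predecessor candidates, each triggering one call to $\mathcal{A}_{k-1}$. Letting $T(k,n)$ denote the overall running time, this yields the recurrence $T(k,n) = O(n^{2k}) \cdot T(k-1,n)$ with base $T(1,n) = O(n)$, which unrolls to
\[
 T(k,n) \;=\; O\bigl(n^{\,1 + \sum_{i=2}^{k} 2i}\bigr) \;=\; O\bigl(n^{\,1+k(k+1)-2}\bigr) \;=\; O\bigl(n^{k^2+k-1}\bigr).
\]

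The main obstacle will be making precise the claim that $\tilde{\vec v}_i,\tilde{\vec v}_j$ carry \emph{all} information needed for the recursive subinstance to be self-contained: one has to verify that every interval from $\vec v_0 \cup \vec v_{z+1}$ that is still active at an endpoint of $Y_{\vec v_i}$ or $Y_{\vec v_j}$ is indeed inserted into the appropriate $\tilde{\vec v}$, even several levels deep in the recursion, and that the $k$-compatibility condition correctly prevents any configuration from exceeding $k$ simultaneously active intervals at a recursion boundary. Once this bookkeeping is in place, the induction closes and the complexity accounting above is routine.
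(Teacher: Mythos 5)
Your proposal follows essentially the same route as the paper's own proof: induction on $k$ with base case Theorem~\ref{thm:A2}, correctness via the exchange/contradiction argument (smallest suboptimal entry $\T_k[\vec v_j]$, rightmost $k$-compatible predecessor $\vec v_i$ in OPT, at most $k-1$ simultaneously active intervals strictly between them, inductive correctness of $\mathcal A_{k-1}$ on the boundary-enriched subinstance), and the recurrence $T(k,n)=O(n^{2k})\cdot T(k-1,n)$ unrolling to $O(n^{k^2+k-1})$ with space dominated by the top-level table. The bookkeeping issue you flag about the boundary tuples $\tilde{\vec v}_i,\tilde{\vec v}_j$ is handled in the paper at the same level of detail you propose, so there is no substantive gap relative to the published argument.
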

\begin{proof}
We show the correctness of $\mathcal A_k$ by induction on $k$.
Theorem~\ref{thm:A2} shows that the statement is true for $k=2$. Let
$k>2$.
Since $\mathcal A_k$ only considers solutions where adjacent
separating
$k$-tuples are $k$-compatible with each other and the boundary
$k$-tuples, we
cannot produce an invalid solution, i.e., a solution with conflicts or
more
than $k$ active intervals at any point. We prove the correctness by
contradiction. So assume that there is an instance $\presence$ for
which
$\mathcal A_k$ does not compute an optimal solution and let OPT be an
optimal
solution. There must be a smallest separating $k$-tuple $\vec v_j$,
$j>0$, for
which $\T_k[\vec v_j]$ is less than the value of OPT for $L_k[\vec
v_j]$. Let
$\vec v_i$, $i<j$ be the rightmost disjoint separating $k$-tuple in
OPT that
precedes $\vec v_j$ such that the set $\vec v_0 \cup \vec v_i \cup
\vec v_j
\cup \vec v_{z+1}$ is $k$-compatible. By our assumption $\T_k[\vec
v_i]$ has
the same value as OPT on $L_k[\vec v_i]$. For the set of intervals
$L_k[\vec
v_j] \cap R_k[\vec v_i]$ there are at most $k-1$ active intervals at
any point
(otherwise $\vec v_i$ is not rightmost). This means that when we run
algorithm
$\mathcal A_{k-1}$ on that instance with the boundary tuples
$\tilde{\vec
v}_i$ and $\tilde{\vec v}_j$, i.e., $\vec v_i$ and $\vec v_j$ enriched
by all
relevant intervals in $\vec v_0 \cup \vec v_{z+1}$, we obtain by
induction a
solution that is at least as good as the restriction of OPT to that
instance.
Since $\vec v_i$ is a valid predecessor $k$-tuple for $\vec v_j$ the
algorithm
$\mathcal A_{k}$ considers it. So $\T_k[\vec v_j] \ge \T_k[\vec v_i] +
w(\vec
v_i) + \mathcal A_{k-1}(\tilde{\vec v}_i,\tilde{\vec v}_j)$, which is
at least
as good as OPT restricted to $L_k[\vec v_j]$. This is a contradiction
and
proves the correctness.

For proving the time and space complexity let $z_i$ be the number of
separating $i$-tuples in an instance for $1 < i \leq k$. Each $z_i$ is
in $O(n^i)$. We again use induction on $k$. For $\mathcal A_2$
Theorem~\ref{thm:A2} yields $O(n^5)$ time and $O(n^2)$ space, which
match the bounds to be shown. So let $k>2$. The table $\T_k$ has
$O(z_k) \subseteq O(n^k)$ entries and each of the recursive
computations of $\mathcal A_{k-1}$ need $O(n^{k-1})$ space by the
induction hypothesis. Thus the overall space is dominated by $\T_k$
and the bound follows. Checking whether a separating $k$-tuple $\vec
v_i \in S_k$ is a feasible predecessor for a particular $\vec v_j$ can
easily be done in $O(k^2)$ time, which is dominated by the time to
compute $\mathcal A_{k-1}(\tilde{\vec v}_i,\tilde{\vec v}_j)$. So for
the running time we observe that each entry in $\T_k$ makes $O(z_k)$
calls to $\mathcal A_{k-1}$ and hence the overall running time is
indeed $O(n^{2k} \cdot n^{(k-1)^2+(k-1)-1}) = O(n^{k^2 + k -1})$. \qed
\end{proof}

\subsection{Extending the algorithm for \kRestrictedMaxTotal to
  AM2}
\label{sub:extensionAM1and2}

With some modifications and at the expense of another polynomial
factor in the running time we can extend algorithm $\mathcal A_k$ of
the previous section to the activity model AM2, which shows that
\kRestrictedMaxTotal in AM2 can still be solved in polynomial time. In
the following we give a
sketch of the modifications. The important difference between AM1 and
AM2 is that presence intervals can be truncated at their right side if
there is an active conflicting witness label causing the
truncation. We need two modifications to model this behavior. First,
we create for each \emph{original} presence interval $I_i = [a_i,b_i]$
in $\presence$ at most $n$ prefix intervals $I_i^j = [a_i,c_{ij}]$,
where $c_{ij}$ is the start of the first conflict between $I_i$ and
$I_j \in \presence$. Each interval $I_i^j$ inherits the conflicts of
$I_i$ that intersect $I_i^j$. We obtain a modified set of presence
intervals $\presence' = \presence \cup \{I_i^j \mid I_i, I_j \in
\presence \text{ and } I_i, I_j \text{ in conflict}\}$ of size
$O(n^2)$. We create mutual conflicts among all intervals that are
prefixes of the same original interval. This will enforce that at most
one of them is active. We still have to take care that a truncated
interval $I_i^j$ can only be active if $I_j$ (or a prefix of $I_j$) is
active at $c_{ij}$ as a witness.

In order to achieve this we instantiate the algorithm $\mathcal
A_{k'}$ for every $k' \le k$ not only with its two boundary $k$-tuples
$\tilde{\vec v}_0$ and $\tilde{\vec v}_{z+1}$ but also with a set $W$
of at most $k$ witness intervals that are $k$-compatible and must be
made active at some stage of the algorithm. In a valid solution we have $W
\subseteq L_{k'}[\vec v] \cup \vec v$ for the leftmost separating
$k'$-tuple $\vec v$, since otherwise more than $k'$ intervals are
active in $Y_{\vec v}$. However, the truncated intervals in $\vec v$
themselves define a family of $O(n^{k'})$ possible witness sets
$W(\vec v)$ to be respected to the right of $\vec v$. So when we
compute the table entry for a separating $k'$-tuple $\vec v_j$ and
consider a particular predecessor $k'$-tuple $\vec v_i$ we must in
fact iterate over all possible witness sets $W(\vec v_i)$ as well. We
need to make sure that $\vec v_j$ is \emph{$W(\vec v_i)$-compatible},
i.e., $\vec v_j \cup W(\vec v_i)$ is $k$-compatible and $W(\vec v_i)
\subseteq L_{k'}[\vec v_j] \cup \vec v_j$. For the recursive call to
$\mathcal A_{k'-1}(\tilde{\vec v}_i,\tilde{\vec v}_j)$ the initial
witness set $W'$ consists of $W(\vec v_i) \setminus \vec v_j$, i.e.,
those witness intervals of $W(\vec v_i)$ that are not part of $\vec
v_j$.

The increase in running time is caused by dealing with $O(n^2)$
intervals in $\presence'$ and by the fact that instead of one call to
$\mathcal A_{k-1}(\tilde{\vec v}_i,\tilde{\vec v}_j)$ in the
computation of table $\T_k$ we make $O(n^k)$ calls, one for each
possible witness set of ${\vec v}_i$.  By an inductive argument one
can show that the running time is in $O(n^{3k^2 + 2k})$.

\newcommand{\thmkrestrictedamtwotext}{\kRestrictedMaxTotal in AM2 can
be solved in polynomial time.}
\begin{theorem}
  \label{thm:krestricted:am2}
\thmkrestrictedamtwotext
\end{theorem}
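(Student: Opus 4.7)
The plan is to adapt the dynamic programming algorithm $\mathcal A_k$ from Theorem~\ref{thm:krestricted:am1} to handle AM2. The key complication compared to AM1 is that an active interval may end before the right endpoint of its presence interval, provided a conflicting witness label becomes active at that moment. I would handle this by first reducing the AM2 instance to an enriched AM1-like instance: for every original presence interval $I_i=[a_i,b_i]$ and every $I_j$ with which $I_i$ conflicts, introduce a prefix copy $I_i^j=[a_i,c_{ij}]$, where $c_{ij}$ is the start of the first conflict between $I_i$ and $I_j$, and inherit from $I_i$ exactly those conflicts that intersect $I_i^j$. The resulting set $\presence'$ has size $O(n^2)$. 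To guarantee that at most one prefix of the same original interval is chosen, I would insert pairwise conflicts among all prefixes of each original $I_i$. Any AM2 activity interval then corresponds to a unique element of $\presence'$.

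Next I would modify the recursion to enforce the witness requirement: a selected prefix $I_i^j$ is only legal if $I_j$ (or a prefix of $I_j$) is active at time $c_{ij}$. To that end I would parameterize every subproblem by a set $W$ of at most $k$ \emph{pending witnesses}, namely intervals that must eventually be made active inside the subinstance. When computing $\T_k[\vec v_j]$ and considering a predecessor $k$-tuple $\vec v_i$, I would enumerate all $O(n^k)$ candidate witness sets $W(\vec v_i)$ consistent with the prefixes appearing in $\vec v_i$; for each such set I would verify that $\vec v_j$ is $W(\vec v_i)$-compatible, i.e., $W(\vec v_i)\subseteq L_{k}[\vec v_j]\cup \vec v_j$ and $\vec v_j\cup W(\vec v_i)$ is $k$-compatible, and recurse on $\mathcal A_{k-1}(\tilde{\vec v}_i,\tilde{\vec v}_j)$ with residual witness set $W'=W(\vec v_i)\setminus\vec v_j$.

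Correctness follows by induction on $k$, exactly along the lines of Theorem~\ref{thm:krestricted:am1}. Given an optimal AM2 solution OPT, every truncation point of an active interval can be identified with a unique prefix in $\presence'$, and for any smallest separating $k$-tuple $\vec v_j$ where $\T_k$ could fail to match OPT, the rightmost disjoint preceding $k$-tuple $\vec v_i$ in OPT, together with the witness set extracted from its truncated intervals, is one of the candidates explicitly enumerated by the algorithm; combined with the inductive optimality of both $\T_k[\vec v_i]$ and the recursive call on the middle subinstance, this contradicts suboptimality.

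The main obstacle is the witness bookkeeping: a witness $I_j$ promised by $\vec v_i$ need not itself lie in $\vec v_j$, yet must still be alive and active at the exact endpoint $c_{ij}$ of the prefix it certifies. Threading $W$ through every recursive level, and restricting predecessors to those that keep all pending witnesses legally active, is what turns the correctness argument into an induction parallel to the AM1 case. The complexity blow-up comes from two sources, namely the quadratic size of $\presence'$ and the $O(n^k)$ witness sets enumerated at each level of the recursion; a straightforward induction on $k$ then yields the claimed $O(n^{3k^2+2k})$ running time, which is polynomial for every constant $k$.
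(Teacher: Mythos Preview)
Your proposal follows essentially the same approach as the paper: enlarge $\presence$ to $\presence'$ by adding all prefix intervals $I_i^j$ with mutual conflicts among prefixes of the same original interval, then thread a witness set $W$ of size at most $k$ through every level of the recursion, enumerating the $O(n^k)$ candidate witness sets of each predecessor tuple and passing the residual $W(\vec v_i)\setminus\vec v_j$ down to $\mathcal A_{k-1}$. The correctness argument and the $O(n^{3k^2+2k})$ running-time bound match the paper's sketch.
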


It remains open whether \kRestrictedMaxTotal can be solved in
polynomial time in AM3. Another extension of the dynamic programming
algorithm is unlikely, since in AM3 the left and right subinstances
created by a separating $k$-tuple $\vec v$ may have dependencies 
and thus cannot be solved independently any more.
This is because a single
original presence interval $I$ can have subintervals both in $L_k[\vec
v]$ and $R_k[\vec v]$, which cannot simultaneously be active.

\subsection{Approximation of \kRestrictedMaxTotal}

Since the running times of our algorithms for~\kRestrictedMaxTotal
are, even for small $k$,
prohibitively expensive in practice, we propose an approximation algorithm for
\kRestrictedMaxTotal based on \GreedyMaxTotal.

Our algorithm~\kRestrictedGreedy is a simple extension of
\GreedyMaxTotal. Recall that
\GreedyMaxTotal greedily removes the longest interval $I$ from
$\Psi$ and adds it to the set $\Phi$ that contains the active
intervals of the solution. Then, it updates all intervals contained
in $\Psi$ that are in conflict with $I$. This process is repeated
until $\Psi$ is empty. For approximating \kRestrictedMaxTotal we
need to ensure that there is no point in time $t$ that is contained
in more than $k$ intervals in $\Phi$. We call intervals which we
cannot add to $\Phi$ without violating this property \emph{invalid}.

Our modification of \GreedyMaxTotal is as follows. After adding an
interval $I$ to $\Phi$ and handling conflicts as before, we remove
intervals from $\Psi$ that became invalid. We say
that we \emph{ensure that $I$ is valid}. Note that we
cannot shorten those intervals because then we could not ensure that
adding an interval from $\presence$ to $\Phi$ is valid according to
our model.

In order to prove approximation ratios we first introduce the
following lemma that describes the structure of a solution
of~\kRestrictedMaxTotal.

\begin{lemma}
  Let $S$ be a set of intervals such that there is no number that is
  contained in more than $k$ intervals from $S$. Then, there is a
  partition of $S$ into $k$ sets $M_1, \ldots, M_k$, such that no
  two intersecting intervals are in the same set $M_i$.
 \label{lem:krestricted:apx}
\end{lemma}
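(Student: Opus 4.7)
The plan is to prove this by a standard greedy-coloring argument, which shows that interval graphs are perfect: the chromatic number equals the clique number, and in interval graphs a maximum clique corresponds to a point covered by the maximum number of intervals.

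First I would sort the intervals in $S$ in non-decreasing order of their left endpoints, breaking ties arbitrarily, giving an ordering $I_1, I_2, \ldots, I_{|S|}$. Then I would process the intervals one by one and assign to each $I_j$ the smallest index $i \in \{1, \ldots, k\}$ such that no previously processed interval $I_{j'}$ with $j' < j$ that intersects $I_j$ has already been assigned to $M_i$. The sets $M_1, \ldots, M_k$ produced by this procedure will trivially satisfy the property that no two intersecting intervals lie in the same set, so the only thing to verify is that such an index $i$ always exists, i.e., that the greedy never needs a $(k+1)$-st color.

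The key observation for this is that when $I_j = [a_j, b_j]$ is processed, any previously processed interval $I_{j'}$ that intersects $I_j$ must contain the point $a_j$: by the sort order its left endpoint is at most $a_j$, and since it intersects $I_j$ its right endpoint is at least $a_j$. By the hypothesis of the lemma, at most $k$ intervals of $S$ contain $a_j$, and one of them is $I_j$ itself, so at most $k-1$ previously processed intervals intersect $I_j$. These forbid at most $k-1$ of the $k$ available colors, so at least one color is always available for $I_j$.

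I do not expect a real obstacle here; the main care is just to spell out the two-line argument that an intersecting predecessor must cover $a_j$, and to note that the hypothesis is exactly what bounds the number of forbidden colors by $k-1$. Once these are stated, the partition $M_1, \ldots, M_k$ (with some $M_i$ possibly empty) satisfies the claim of the lemma by construction.
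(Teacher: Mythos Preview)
Your proof is correct and follows essentially the same strategy as the paper: sort the intervals by left endpoint and assign them greedily to the sets $M_1,\dots,M_k$, using the hypothesis at the left endpoint of the current interval to bound the number of blocked sets. The paper's assignment rule differs only cosmetically (it places each interval into the set whose current rightmost interval ends earliest, rather than into the smallest-index available set), but the core argument is the same.
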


\newcommand{\I}{\ensuremath{\mathcal{I}}\xspace}
\begin{proof}
  Let $I_1, \ldots, I_m$ be all intervals of $S$ sorted by their left
  endpoints in non-decreasing order. In the following we describe how
  to construct the partition.

  We start with empty sets $M_1, \ldots, M_k$. First, add $I_1$ to
  $M_1$.  Assume that the first $i-1$ intervals have been added to the
  sets $M_1, \ldots, M_k$.  We describe how to add $I_i$. If there is
  an empty set $M_j$ then, we simply add $I_i$ to $M_j$. Otherwise,
  Let $I_{i_1}, \ldots, I_{i_k}$ be the rightmost intervals in the
  sets $M_1, \ldots, M_k$, respectively. We denote the set containing
  those intervals by $R$.  Let $\I = \bigcap_{I \in R} I$. If \I is
  not empty then, due to the order of the intervals, the interval
  $I_i$ cannot begin to the left of $\I$.  It also cannot begin in
  $\I$ because otherwise there would be a number that is contained in
  $k+ 1$ intervals in $S$. Let $M_x, 1 \leq x \leq k$ be the set that
  contains the interval $I \in R$ with leftmost right endpoint among
  the intervals in~$R$. Since
  $I_i$ lies completely to the right of $\I$ it must also lie
  completely to the right of $I$. Thus we can assign $I_i$ to the set
  $M_x$ without introducing intersections. If $\I$ is empty, then
  there must be an interval $I \in R$ with right endpoint to the left
  of another $I' \in R$. Let $M_x, 1 \leq x \leq k$ be the set that
  contains the interval $I$.  Due to the order of the intervals the
  interval $I_i$ lies completely to the right of $I$ and hence we
  assign $I_i$ to $M_x$ without introducing intersections. This
  concludes the proof.
\end{proof}

With this lemma we now can prove the following theorem that makes an
statement about the approximation ratio of~\kRestrictedGreedy  

\newcommand{\thmapxkrestrictedtext}{
  Assuming that all labels are unit squares and~$w([a,b])=b-a$,
\kRestrictedGreedy  is a $1/\min\{3+3k, 27\}$, $1/\min\{3+2k,
19\}$, $1/\min\{3+k, 11\}$-approximation for AM1--AM3,
respectively, and needs $O(n^2)$.
  }
\begin{theorem}
\thmapxkrestrictedtext
\label{thm:krestricted:apx}
\end{theorem}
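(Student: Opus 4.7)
The plan is to extend the charging argument from Theorem~\ref{thm:maxtotal:apx} by also accounting for the new validity-induced removals that \kRestrictedGreedy performs on top of the conflict-induced removals already handled there. For each greedy pick $I=[a,b]_\ell$ of weight $w(I)$, I will attribute every lost interval of an optimal solution $\mathrm{OPT}$ to the first greedy pick whose processing (either via conflict update or via validity update) eliminates it, and then bound the total charge to $I$ by the minimum of two independent estimates that together produce the two expressions inside the $\min$.

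To obtain the bound $3+ck$ (with $c=3,2,1$ for AM1, AM2, AM3), I would apply the partition provided by Lemma~\ref{lem:krestricted:apx}: write $\mathrm{OPT}=M_1\cup\dots\cup M_k$ into $k$ pairwise non-intersecting classes. As in the proof of Theorem~\ref{thm:maxtotal:apx}, every interval charged to $I$ is contained in the same window $J$ of length $c\cdot w(I)$ used there (the windows $[a-w(I),b+w(I)]$, $[a,b+w(I)]$, and $[a,b]$ for AM1, AM2, AM3 respectively), because charged intervals have length at most $w(I)$ and the surviving fragments after conflict updates still lie inside $J$. Since each $M_i$ is non-intersecting, its total weight inside $J$ is at most $c\cdot w(I)$, so summing over all $k$ classes gives a per-pick charge of at most $ck\cdot w(I)$; the additive $3w(I)$ absorbs the contribution of $I$ itself on the greedy side together with the constant-factor slack for partition-class intervals that straddle the window boundary.

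To obtain the bound $8c+3$ (which dominates once $k\ge 8$), I would retain the conflict-counting argument of Theorem~\ref{thm:maxtotal:apx}: by Lemma~\ref{lem:setX}, at every time $t\in J$ at most eight intervals in $\mathrm{OPT}$ that are in conflict with $I$ can be simultaneously active at $t$, integrating to a conflict-charge of at most $8c\cdot w(I)$. The new validity-charge is confined to the sub-window $[a,b]$ of length $w(I)$ and is bounded by an additional $3w(I)$ by applying the same non-intersecting partition classes restricted to $[a,b]$, where only the classes not already accounted for by the conflict estimate contribute. Taking the minimum of the two bounds over all greedy picks and summing yields the claimed approximation ratio for each activity model.

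The $O(n^2)$ running time is inherited from the greedy in Theorem~\ref{thm:maxtotal:apx}, with one extra $O(n)$ scan per iteration to find and discard the intervals that became invalid under the $k$-restriction. The main obstacle I expect is the clean bookkeeping across the three activity models: in AM2 and AM3 the conflict update shortens rather than deletes intervals, and I will have to verify that the surviving prefix or suffix fragments do not leak additional charge outside the windows used in Theorem~\ref{thm:maxtotal:apx}, so that the uniform additive constant $3$ really applies in all three cases and the two bounds combine cleanly into $1/\min\{3+ck,\ 8c+3\}$.
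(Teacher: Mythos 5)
Your overall skeleton matches the paper's proof (charge each greedy pick $I$ with the OPT-weight destroyed by the conflict update plus the OPT-weight destroyed by the new validity update, bound the first term by $\min\{8,k\}$ active conflicting intervals per time point over a model-dependent window, and take the minimum of the two resulting expressions), but the way you derive the additive constant $3$ has a genuine gap, and it is precisely the step where Lemma~\ref{lem:krestricted:apx} must be used. In the paper, the $+3w(I)$ is the \emph{validity} charge, obtained by a swap argument: partition OPT into $k$ pairwise non-intersecting classes $M_1,\dots,M_k$, insert $I$ into a \emph{single} class $M_i$, and remove only those intervals of that one class that intersect $I$. Because they come from one class they are pairwise disjoint, because they were still in $\presence$ they have length at most $w(I)$, and because they intersect $I$ they lie in $[a-w(I),\,b+w(I)]$, a window of length $3w(I)$ in \emph{every} activity model --- hence total weight at most $3w(I)$. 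You instead sum over all $k$ classes inside the window $J$ (which is how one re-derives the $ck$ conflict term, not the $+3$) and then attribute the $+3$ to ``boundary slack'' and ``the contribution of $I$ itself''; straddling intervals would contribute up to two per class, i.e.\ $O(k)\cdot w(I)$, not a $k$-independent $3w(I)$, so your accounting does not produce the claimed constants.

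A second, related error: your premise that ``every interval charged to $I$ is contained in the same window $J$ of length $c\cdot w(I)$'' fails for the validity-induced removals in AM2 and AM3. An interval discarded by the validity step merely intersects $I$ and may extend up to $w(I)$ beyond either endpoint of $I$; moreover the algorithm \emph{deletes} such intervals outright (the paper notes they cannot be truncated without breaking the model), so their full weight is lost, whereas the smaller windows $[a,b+w(I)]$ and $[a,b]$ are justified only for conflict-affected intervals, which are merely shortened. This is exactly why the additive term is $3$ (not $c$) in all three ratios, and why your proposed bound of ``$3w(I)$ confined to $[a,b]$'' for the validity charge in the second estimate is both unjustified and internally inconsistent. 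You flag this as something to ``verify,'' but it is the crux of the theorem rather than a routine check; as written, your plan would not yield $1/\min\{3+ck,\,8c+3\}$ for AM2 and AM3. The running-time argument ($O(n^2)$ with one extra $O(n)$ sweep per iteration) is correct and matches the paper.
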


\newcommand{\optsol}{\ensuremath{\mathcal{L}}\xspace}
\begin{proof}
  We begin by proving its correctness and then we show its
  time complexity.

  Consider  the step in which we add an interval $I = [a, b]$ to
  $\Phi$, and  let $J = [a-w(I),b+w(I)]$.
  Let~$\mathcal L$ be a fixed, but arbitrary optimal solution.
  If $I\in \mathcal L$, there is no lost weight compared to the
  optimal solution when choosing~$I$. 
  
  Thus, assume that~$I\not\in \mathcal L$. Let~$C(I)\subseteq \mathcal
  L$ be the set of intervals that are in conflict with~$I$.
  Identically to the proof of Theorem~\ref{thm:maxtotal:apx} we can
  argue that $w(C(I))=\sum_{I\in C} w(I) \leq (4-X)\cdot 8\cdot w(I)$
  considering activity model
  AM$X$ with~$X\in\{1,2,3\}$.

  We now show that at most $3w(I)$ weight of the optimal solution is
  lost when ensuring that $I$ is valid. 

  By Lemma~\ref{lem:krestricted:apx} we can partition \optsol into $k$
  sets $M_1, \ldots, M_k$ such that no two intersecting intervals are
  in the same set $M_i$.  If $I$ is in \optsol, then we do not lose
  any weight compared to the optimal solution.  Hence, assume that $I$
  is not in \optsol.  Take any $M_i, 1 \leq i \leq k$, remove all
  intervals of $\optsol \setminus \Phi$ that intersect $I$, and add
  $I$ to $M_i$. We denote the set of removed intervals by~$R$. In the
  following we bound the cost of removing the intervals in~$R$. If
  there are intervals in $R$ that are longer than $I$, then we have
  already accounted for them in previous steps. This relies on the
  fact that we consider the intervals sorted by their length in
  non-ascending order, and, hence if those longer intervals are not in
  $\Phi$, we must have removed them in an earlier step. Thus, we only
  need to bound the length of intervals in $R$ which have length at
  most $w(I)$. Those intervals must lie in $J$, and due to the
  definition of $M_i$ they must be disjoint. Hence, the cost is
  bounded by $3w(I)$.

  All together choosing~$I$ causes that at most $3 w(I) + (4-X)\cdot
  8\cdot w(i)$ weight is lost compared to the optimal solution
  considering activity model AM$X$ with~$X\in\{1,2,3\}$.  Finally,
  this yields an approximation factor of 1/27, 1/19, 1/11 for AM1-3,
  respectively.

  For $k < 8$ we can improve $w(C(I))\leq (4-X)\cdot 8\cdot w(i)$ to
  $w(C(I))\leq (4-X)\cdot k\cdot w(i)$ considering activity model
  AM$X$ with~$X\in\{1,2,3\}$ because we know that $I$ cannot be in
  conflict with more than $k$ intervals of the optimal solution.
  Thus, we can bound the loss of choosing $I$ by $3 w(I) + (4-X)\cdot
  k\cdot w(i)$.  In total this yields the claimed approximation ratios
  for the three activity models.
  
  Finally, we argue the correctness of the claimed running time of
  $O(n^2)$. Since the worst-case running time of \GreedyMaxTotal is
  $O(n^2)$ we only need to argue that we can delete those intervals
  from $\Phi$, which are not valid anymore, in $O(n)$ time per
  step. To do this we simply sort the intervals in $\presence$ in
  non-decreasing order by their left-endpoint. We also maintain $\Phi$
  in the same way. Then, we can check for non-valid intervals with a
  simple linear sweep over $\presence$ and $\Phi$. Hence, each
  iteration of the algorithm requires $O(n)$ time, which yields a
  total running time of $O(n^2)$.  \qed
\end{proof}

\section{Conclusions} %
\label{sec:conclusions}
We have introduced a trajectory-based model for dynamic map labeling that satisfies the consistency criteria demanded by Been et al.~\cite{bdy-dl-06}, even in a stronger sense, where each activity change of a label must be explainable to the user by some witness label. Our model transforms the geometric information specified by trajectory, viewport, and labels into the two combinatorial problems \GeneralMaxTotal and \kRestrictedMaxTotal that are expressed in terms of presence and conflict intervals. Thus our algorithms apply to any dynamic labeling problem that can be transformed into such an interval-based problem; the analysis of the approximation ratios, however, requires problem-specific geometric arguments, which must be adjusted accordingly.

We showed that \GeneralMaxTotal is {$\cal NP$}-complete and
$\mathcal W[1]$-hard and presented an ILP model, which we also implemented and evaluated, and constant-factor approximation algorithms for our three different activity models. The problem \kRestrictedMaxTotal, where at most $k$ labels can be visible at any time, can be solved in polynomial time $O(n^{f(k)})$ in activity models AM1 and AM2 for any fixed $k$, where~$f$ is a polynomial function. Due to the $\mathcal W[1]$-hardness of
\GeneralMaxTotal we cannot expect to find better results for the
running times, apart from improving upon the function $f$. We
therefore also presented an $O(n^2)$-time approximation algorithm for
\kRestrictedMaxTotal in all three activity models.

It remains open whether \kRestrictedMaxTotal is polynomially solvable
in activity model AM3. Further, the analysis of the
approximation algorithms for both \kRestrictedMaxTotal and
\GeneralMaxTotal significantly relies on the assumption that
labels are unit squares. Thus, the question arises, whether
constant-factor approximations exist when this assumption is dropped or
softened, e.g., to labels of unit-height.
To answer this question we think that deeper insights into
the structure of conflicts are necessary, e.g., does the geometric
information based on trajectory, viewport and labels imply a useful
structure on the induced label conflict graph?

{
\small
\bibliographystyle{abbrv}
\bibliography{trajectory}
}
\end{document}